\DeclareMathAlphabet{\mathpzc}{OT1}{pzc}{m}{it}
\newtheorem{theorem}{Theorem}
\newtheorem{lemma}{Lemma}
\theoremstyle{remark}
\begin{document}

\title{Coverage Probability Analysis Under Clustered Ambient Backscatter Nodes}
\author{ Dong~Han  and Hlaing~Minn
\thanks{ The authors are with the Department of Electrical and Computer Engineering, University of Texas at Dallas, Emails:  \{dong.han6, hlaing.minn\}@utdallas.edu. } 
}

\maketitle

\begin{abstract}
	In this paper, we consider a new large-scale communication scheme where randomly distributed AmBC nodes are involved as secondary users to primary transmitter (PT) and primary receiver (PR) pairs. The secondary communication between a backscatter transmitter (BT) and a backscatter receiver (BR) is conducted by the BT's reflecting its corresponding PT's signal with different antenna impedances, which introduces additional double fading channels and potential inter-symbol-interference to the primary communications. Thus, at a typical PR, the backscatter signals are regarded as either decodable signals or interference. Assuming the locations of PTs form a Poisson point process and the locations of BTs form a Poisson cluster process, we derive the SINR and SIR based coverage probabilities for two network configuration scenarios. Numerical results on the coverage probabilities indicate the possibility to involve a large amount of AmBC nodes in existing wireless networks.
\end{abstract}	

\begin{IEEEkeywords}
	Ambient backscatter communications, stochastic geometry, coverage probability, double fading, IoT. 
\end{IEEEkeywords}	

\section{Introduction} 
\subsection{Background and Motivation}
The backscatter mechanism enables backscatter transmitters (BTs) to have a simple structure consisting of no active radio frequency (RF) component, which is strongly favored by the Internet-of-things (IoT) application scenarios where many power-limited devices need to be connected.
There are three configurations of backscatter communication systems, namely, monostatic backscatter (where the carrier emitter and receiver are co-located), bistatic backscatter (where the carrier emitter and receiver are geographically separated) and ambient backscatter (where ambient RF signals are used as carriers)\cite{van2018ambient}.
For conventional monostatic backscatter techniques such as radio frequency identification (RFID), the transmitter can only passively transmit to the reader when being inquired\cite{boyer2014invited}, which limits the application scenarios of this technology. 
Recently, the developments in wireless power transfer (WPT)\cite{zeng2017communications} and signal detection techniques, such as successive interference cancellation (SIC)\cite{weber2007transmission,ding2014performance,wildemeersch2014successive}, reignite the backscatter communication. Consequently, two research thrusts of backscatter communications are beginning to be eagerly investigated, i.e., the Wireless Powered Backscatter Communication (WPBC)\cite{clerckx2017wirelessly,han2017wirelessly} and the Ambient Backscatter Communication (AmBC)\cite{liu2013ambient,yang2018cooperative}.

The WPBC system has a bistatic configuration, where one or more carrier emitters transmit sinusoidal continuous waves (CWs) for the BT to 1) operate with the harvested energy from a portion of the CWs and 2) modulate its information bits with the other portion of the CWs by backscattering the CWs with different antenna impedances. Then, the backscatter receiver (BR) receives the backscattered (modulated) signals from the BT and detect the `0' or `1' information based on the average symbol energy level. 

\begin{figure}[!tbp]
	\centering
	\includegraphics[width = 0.95\linewidth]{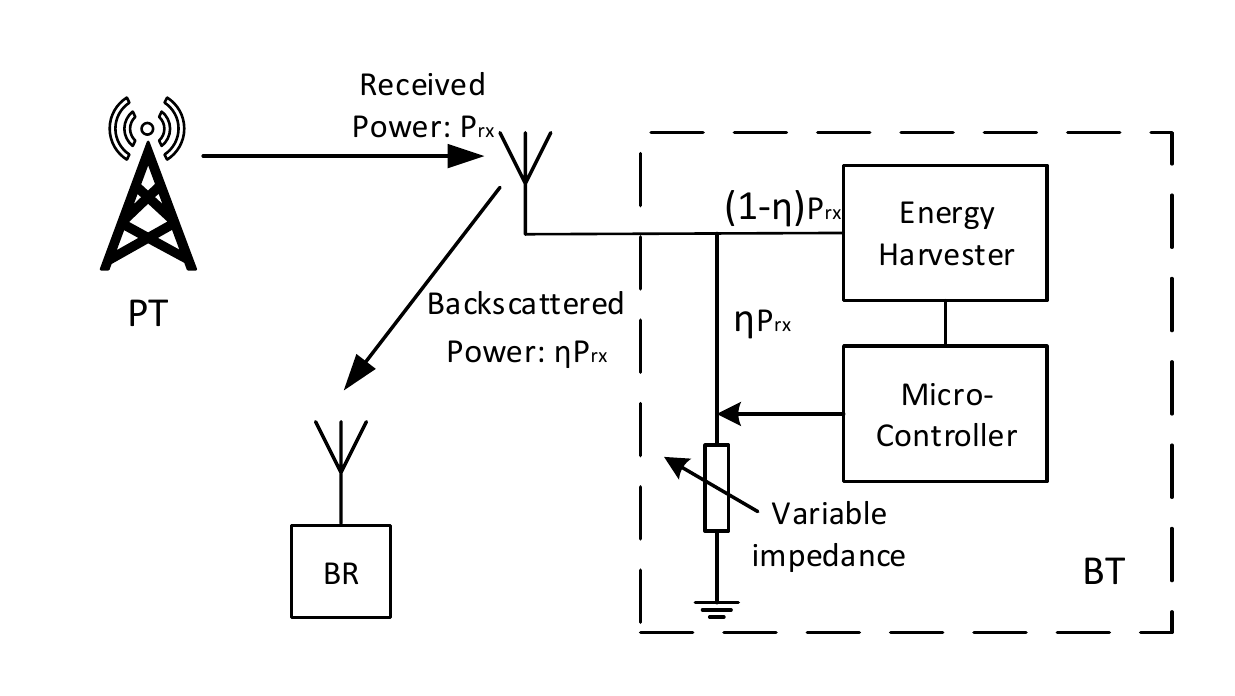} 
	\caption{Ambient backscatter scheme} 	
	\label{backscatter}
	\hfill
\end{figure}

AmBC was proposed to enable devices to communicate by backscattering ambient RF signals\cite{liu2013ambient}. As shown in Fig.~\ref{backscatter}, the BT harvests energy from the ambient signal and quickly transmits its own information bits to the corresponding BR by changing the impedance of its antenna in the presence of the ambient signal. For instance, the BT transmits '0' by setting a high antenna impedance and transmits '1' by adjusting to a low antenna impedance so that the BR can distinguish the different backscattered signal energy levels from the BT.

Recent studies about AmBC mainly focus on the signal detection perspective and most of them consider a single BT-BR pair and a PT (which emits the ambient RF signal) although multiple antennas are involved\cite{yang2018cooperative,wang2016ambient,qian2017noncoherent}. Concerning the rapid growth of IoT devices, we believe investigating the scalability of AmBC is also a crucial issue. Thus, this paper considers to include the AmBC nodes in conventional large-scale wireless communication networks where the locations of primary transmitters (PTs) such as users and devices form a Poisson point process (PPP) and the BTs cluster around each PT and backscatter the corresponding PT's signal for their own data transmission. 

\subsection{Related Works and Our Contributions}
Stochastic geometry based approaches have been realized to be efficient and tractable for analyzing complex heterogeneous networks (HetNet) \cite{elsawy2013stochastic}. With some assumptions to the distribution (such as PPP) of the node locations, the system performance of a HetNet can be expressed by quickly computable integrals with a small number of parameters \cite{andrews2011tractable}. Recent studies have found that simply using a PPP based geometric model is not rich enough to analyze the increasingly complex HetNet, yet the Poisson cluster process (PCP) based analysis is more capable \cite{saha20183gpp}. Some quantitative properties of PCP and PCP based device-to-device (D2D) network can be found in \cite{saha20183gpp, afshang2017nearest, afshang2017nearest2, joshi2018coverage}. Specifically, the coverage probabilities of several HetNet configurations based on the 3rd generation partnership project (3GPP) model were studied in \cite{saha20183gpp}. The nearest neighbor and contact distance distributions for Mat\'ern and Thomas cluster processes were investigated in \cite{afshang2017nearest} and \cite{afshang2017nearest2}, respectively. The authors in \cite{joshi2018coverage} derived the approximate coverage probability of a PCP-based D2D network with Nakagami-m fading channel.

Considering multiple randomly distributed carrier emitters (power beacons) and clustered passive devices, the authors in \cite{han2017wirelessly} proposed a large-scale WPBC network and derived the network coverage probability and capacity. The achievable rate region for a single-tag backscatter multiplicative multiple-access channel, where the receiver detects both the transmitter and the tag's signals, was derived in \cite{liu2018backscatter}. A hybrid transmission scheme that integrates AmBC and wireless powered communications was proposed in \cite{lu2018ambient} to trade off the hardware implementation complexity and the data transmission performance.

However, to the best of our knowledge, none of the existing studies have considered to apply a PCP model to an AmBC system, of which the analysis is different from those proposed in the references. Specifically, in an existing large-scale wireless network, newly deployed AmBC nodes will change the effective channel response between a primary transmitter (PT) and a primary receiver (PR). In this scenario, the backscattered signals can be regarded as either decodable signals or interference at a typical PR, which will affect the coverage probability of a typical PT. Therefore, we derive an analysis of the signal-to-interference-plus-noise ratio (SINR) and signal-to-interference ratio (SIR) based coverage probability at a typical PR. The contributions of this paper are summarized below.
\begin{itemize}
	\item We construct a new large-scale communication scenario where randomly distributed ambient backscatter nodes are involved as secondary users to primary transmitter (PT) and primary receiver (PR) pairs. The backscatter signals are regarded as either decodable signals or interference. We denote $\beta$ as the fraction of decodable backscattered signal power and propose an approach to estimate $\beta$.
	\item Considering the double-fading effect, we derive the SINR and SIR based coverage probabilities for two network configuration scenarios (i.e., all PTs are surrounded by active BTs in scenario-1 and only a typical PT is surrounded by active BTs in scenario-2) with two ranges of $\beta$ values. 
	\item Extensive numerical results are provided to evaluate the coverage probabilities in different scenarios with various key system parameters. Comparing with the benchmark scenario where no BT exists, the numerical results indicate the possibility and advantages to involve a large amount of AmBC nodes in existing wireless networks. 
\end{itemize}

This paper is organized as follows. The system model and signal representations are described in Section II. Then, we derive the coverage probabilities in Section III. Numerical simulation results and conclusions are provided in Section IV and Section V, respectively. 

{\color{black}\textit{Notations:} $B(\boldsymbol{c},r)$ represents a disk centered at $\boldsymbol{c}$ with radius $r$ and bold number $\boldsymbol{0}$ refers to the origin. $\mathbb{E}_\mathrm{X}(\cdot)$ denotes the expectation operator over $X$. 
}

\section{System Model} 	
\subsection{Spatial Distribution Models} 
\begin{figure}[!tbp]
	\centering
	\includegraphics[width = 1\linewidth]{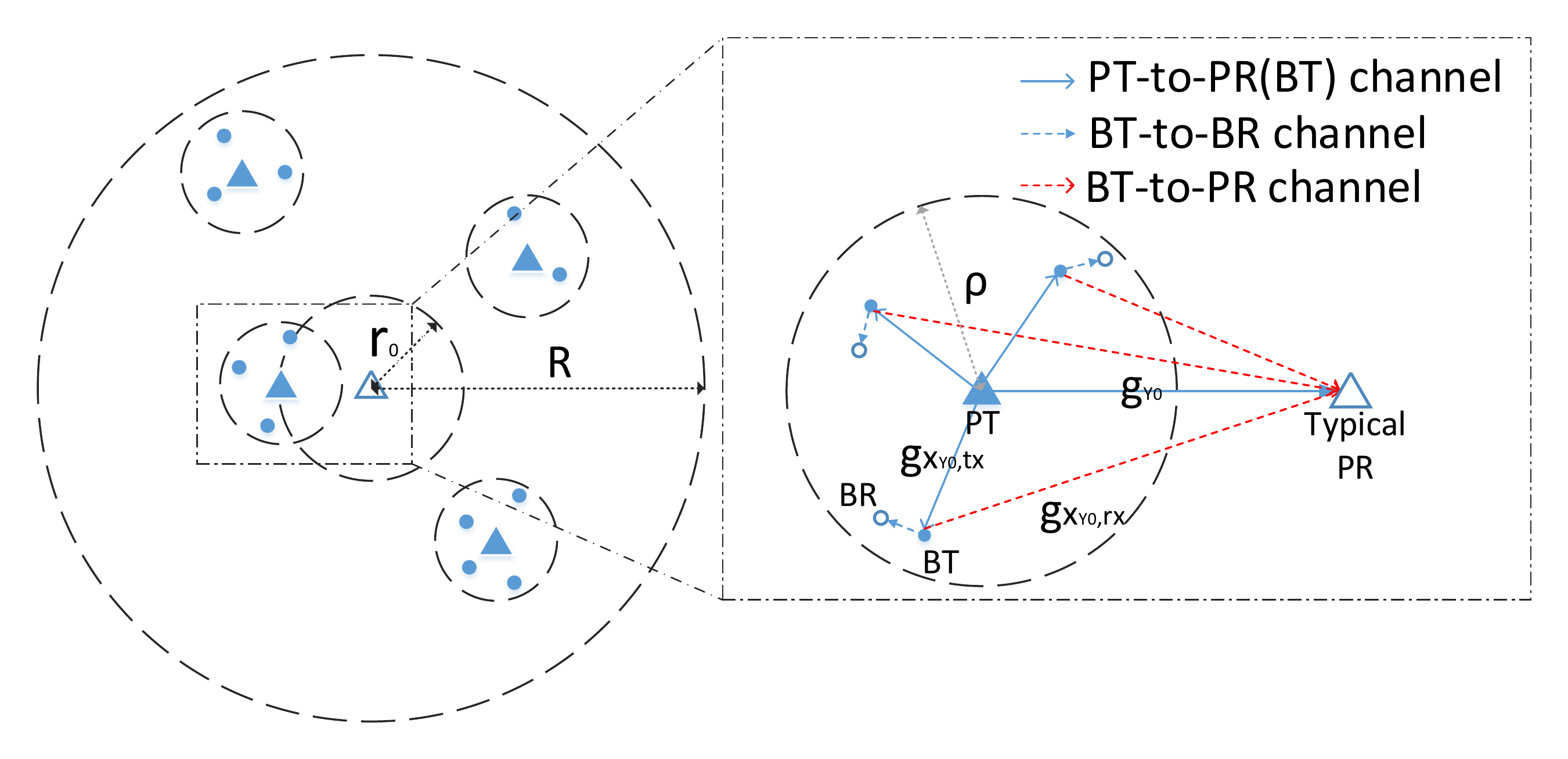} 
	\caption{Topology of the considered system (scenario-1)} 
	\label{system}
	\hfill
\end{figure}

{\color{black}The system we consider consists of two tiers (layers), where the first tier includes all the PTs and PRs, and the second tier includes all the BTs and BRs. Under such system, we study two BT deployment scenarios. For scenario-1, each PT is surrounded by a cluster of BTs, shown in Fig. \ref{system}. For scenario-2, only the typical PT is surrounded by a cluster of BTs.}
Since the SINR and SIR based coverage probability derivations of both scenarios are similar, we mainly focus on the SINR-based derivation of scenario-1 in the following.
Without loss of generality, we set a typical PR at the origin and its corresponding PT {\color{black}(i.e., the closest PT to the typical PR)} at coordinate $Y_0 = (r_0,0)$\footnote{For a homogeneous PPP with density $\lambda$, the distance between an arbitrary (typical) point and its closest point is Rayleigh distributed with the scale parameter ${1}/{\sqrt{2 \pi \lambda}}$ \cite{andrews2011tractable}. Thus, coverage probabilities based on a fixed distance $r_0$ can be extended to a general coverage probability by de-conditioning them with the distribution of $r_0$.}. The locations of other PTs which cause interference to the typical PR form a PPP $\Phi_\mathrm{P} = \{Y_j\}$, where $Y_j \in \mathbb{R}^2,\ j = 1,2,\ldots, M$, represents the coordinate of the $j$th PT, with a constant density $\lambda_\mathrm{P}$ in the ring-shape region $B(\boldsymbol{0},R)-B(\boldsymbol{0}, r_0)$. 
The locations of the BTs\footnote{The distribution of BRs is not considered since it does not affect the coverage probability.} form a Mat\'ern cluster process\cite{afshang2017nearest} represented by $\bigcap_{j=1}^{M} \Phi_\mathrm{B}(Y_j)$ in scenario-1 (where $\Phi_\mathrm{B}(Y_j) = \{X_\mathrm{Y_j}\}$, with $X_\mathrm{Y_j} \in \mathbb{R}^2$ representing the coordinate of a BT in the disk $B(Y_j,\rho)$), and a PPP represented by $\Phi_\mathrm{B}(Y_0)$ in scenario-2. For a compact expression, the distance between a PT at $Y$ and the typical PR at the origin is denoted by $r_\mathrm{Y}$. Similarly, we use $r_\mathrm{X_{Y},tx}$ ($r_\mathrm{X_{Y},rx}$) to represent the distance between the PT at $Y$ (the typical PR at the origin) and the offspring BT at $X_\mathrm{Y}$.

In addition, we assume that the density of PTs is much smaller than the BTs' density, i.e., $\lambda_P \ll \lambda_B$, such that the distances between BTs and their non-parent PTs are relatively large in average sense, resulting in much severer path losses than those between BTs and their parent PTs. Thus, we further assume that the information signals sent from the parent PT is the only {\color{black}RF} power source of its offspring BTs and leave the analysis of multiple power sources for future study.

\subsection{Signal Communication Model} 
We denote the transmit symbol of the PT located at $Y$ at time $t$ by $\sqrt{P_\mathrm{tx}} s_\mathrm{Y}(t)$, {\color{black}where $P_\mathrm{tx}$ is the constant transmit power} and $s_Y(t) \sim \mathcal{CN}(0,1)$ is the normalized complex Gaussian distributed symbol. The backscatter symbol of the BT located at $X_\mathrm{Y}$ at time $t$ is denoted by $b_\mathrm{X_Y}(t)$. Since a BT only reflects the ambient signal using two impedance levels, we assume that the BTs' symbols are independent and Bernoulli-distributed with equal probability, i.e., $b_\mathrm{X_Y}(t) \sim \mathrm{Bernoulli} \left(\frac{1}{2}\right),\ \forall X_\mathrm{Y}$. 
As shown in Fig.~\ref{backscatter}, 
the reflection coefficient is $\eta \in [0,1]$, which means $\eta P_\mathrm{rx}$ of the received power $P_\mathrm{rx}$ is backscattered by the BT and $(1-\eta)P_\mathrm{rx}$ of the power is harvested by the BT for modulation and control purpose. Besides, we simply assume backlogged transmissions for both tiers so that the BTs and BRs can always be active based on the harvested energy. The AmBC throughput maximization problem regarding to mode switching policy has been investigated in \cite{wen2019throughput}. 

Furthermore, we assume independent and identically distributed (i.i.d.) Rayleigh fading with average power gain of $1/\mu$ and path loss with exponent $\alpha$ over all channels. 
In particular, suppose the Rayleigh fading channel power gains between the PT at $Y$ and the typical PR at the origin, the PT at $Y$ and the BT at $X_\mathrm{Y}$, and the BT at $X_\mathrm{Y}$ and the typical PR are $g_\mathrm{Y}$, $g_\mathrm{X_Y,tx}$ and $g_\mathrm{X_Y,rx}$, respectively, each with exponential distribution with parameter $\mu$, denoted by $\exp(\mu)$. Then, the channel response can be written as ${h}_\mathrm{Y} = \sqrt{ {g}_\mathrm{Y} } e^{j\theta_\mathrm{Y}},\ {h}_\mathrm{X_Y,tx}  = \sqrt{{g}_\mathrm{X_Y,tx}} e^{j\theta_\mathrm{X_Y,tx}} \text{ and } {h}_\mathrm{X_Y,rx}  = \sqrt{{g}_\mathrm{X_Y,rx}} e^{j\theta_\mathrm{X_Y,rx}}$, respectively, each with zero-mean-complex-Gaussian distribution, where $\theta_\mathrm{Y}$, $\theta_\mathrm{X_Y,tx}$ and $\theta_\mathrm{X_Y,rx}$ represent the zero-mean uniformly distributed channel phases. To avoid the singularity at the origin, the path loss is expressed as $L(r) = (1+r^\alpha)^{-1}$ for a distance $r$.
Thus, we denote the received signal at the typical PR as the summation of several signals:
\begin{equation}
\begin{split}
y(t) = {s}_\mathrm{PT}(t) + {s}_\mathrm{BT}(t) + {I}_\mathrm{PT}(tx) + {I}_\mathrm{BT}(t) + n(t)
\end{split}
\end{equation}
where $n(t) \sim \mathcal{CN}(0,\sigma^2)$ is the complex Gaussian noise at PR, ${s}_\mathrm{PT}(t),\ {s}_\mathrm{BT}(t),\ {I}_\mathrm{PT}(t)$, and ${I}_\mathrm{BT}(t)$ represent the signals from the typical PT at $Y_0$, from the offspring BTs of the typical PT, from the atypical PTs, and from the offspring BTs of the atypical PTs, respectively. Particularly, we have
\begin{align}
& {s}_\mathrm{PT}(t) = {h}_\mathrm{Y_0} \sqrt{ L(r_0) P_\mathrm{tx}} s_\mathrm{Y_0}(t-\tau_\mathrm{Y_0}),\\
& {s}_\mathrm{BT}(t) = \underset{X_\mathrm{Y_0} \in \Phi_\mathrm{B}(Y_0) }{\sum} z_\mathrm{X_{Y_0}} \sqrt{\eta P_\mathrm{tx}} s_\mathrm{Y_0}(t-\tau_\mathrm{X_{Y_0}}) b_\mathrm{X_{Y_0}}  
\end{align}
\begin{align}
& {I}_\mathrm{PT}(t) = \underset{Y \in \Phi_\mathrm{P}}{\sum}  {h}_\mathrm{Y} \sqrt{ L(r_\mathrm{Y}) P_\mathrm{tx}} s_Y(t-\tau_\mathrm{Y}), \\
\begin{split}
&{I}_\mathrm{BT}(t)  =  \underset{Y \in \Phi_\mathrm{P} }{\sum} \ \underset{X_\mathrm{Y} \in \Phi_\mathrm{B}(Y) }{\sum} z_\mathrm{X_Y} \sqrt{ \eta P_\mathrm{tx}  }  s_\mathrm{Y}(t-\tau_\mathrm{X_Y}) b_\mathrm{X_Y}
\end{split}
\end{align}
where $z_\mathrm{X_Y} \triangleq {h}_\mathrm{X_Y,tx} {h}_\mathrm{X_Y,rx} \sqrt{  L(r_\mathrm{X_Y,tx}) L(r_\mathrm{X_Y,rx})}$, $\tau_\mathrm{Y}$ and $\tau_\mathrm{X_Y}$ are the time delays of the PT to PR path (direct path) and the PT-BT-PR path (backscatter path), respectively, and the subscripts $Y$ and $X_\mathrm{Y}$ indicate the locations of the PT and BT. We note that since the backscatter symbol $b_\mathrm{X_Y}(t)$ has a much larger symbol duration than the primary symbol $s_\mathrm{Y}(t)$ (i.e., $b_\mathrm{X_Y}(t)$ is constant in many successive symbols of $s_\mathrm{Y}(t)$) \cite{liu2013ambient}, the time index of and the time delay encountered by $b_\mathrm{X_Y}(t)$ are neglected.

\section{Analysis of Coverage Probability} 

\subsection{Signal and Interference Power} 
For a compact expression, we use vectors $\boldsymbol{z}_\mathrm{Y} = [\ldots, z_\mathrm{X_Y}, \ldots]^\mathrm{T}$ and $\boldsymbol{b}_\mathrm{Y} =  [\ldots, b_\mathrm{X_Y}, \ldots]^\mathrm{T} $ to represent the $z_\mathrm{X_Y}$'s and the $b_\mathrm{X_Y}$'s in the cluster centered at $Y$, respectively. Next, we can write the power of  ${I}_\mathrm{BT}(t)$ conditioned on the interference channels, path losses and the backscattered symbols as 
\begin{equation}
\begin{split}
\tilde{\mathcal{I}}_\mathrm{BT} = {\mathbb{E}} _{s_Y} \left[ \left| {I}_\mathrm{BT}(t) \right|^2 \right] = \eta P_\mathrm{tx}  \sum_{Y \in \Phi_P } \left| \boldsymbol{z}_\mathrm{Y}^\mathrm{T} \boldsymbol{b}_\mathrm{Y} \right|^2.
\end{split}
\label{interference0}
\end{equation}
However, the cross multiplication terms in (\ref{interference0}) make it difficult for further analysis, motivating us to decondition $\tilde{\mathcal{I}}_\mathrm{BT}$ on the channel phases\footnote{assuming that the channel phases change faster than amplitudes.} and backscattered symbols as 
\begin{equation}
\begin{split}
& \resizebox{0.9\hsize}{!}{$ \mathcal{I}_\mathrm{BT}  = \underset{Y \in \Phi_\mathrm{P} }{\sum} \underset{ \boldsymbol{b}_\mathrm{Y}, \theta}{\mathbb{E}} \left[ \tilde{\mathcal{I}}_\mathrm{BT}  \right] = \eta P_\mathrm{tx} \underset{Y \in \Phi_\mathrm{P} }{\sum} \underset{\boldsymbol{b}_\mathrm{Y} }{ \mathbb{E} } \left[  \boldsymbol{b}_\mathrm{Y}^\mathrm{T} \underset{\theta}{\mathbb{E}} \left[ \boldsymbol{z}_\mathrm{Y} \boldsymbol{z}_\mathrm{Y}^\mathrm{H} \right]\boldsymbol{b}_\mathrm{Y} \right] $} \\
& \quad \ \ = \frac{\eta P_\mathrm{tx} }{2}  \underset{Y \in \Phi_\mathrm{P} }{\sum} \underset{X_\mathrm{Y} \in \Phi_\mathrm{B} }{\sum}  g_\mathrm{X_Y,tx} g_\mathrm{X_Y,rx} L(r_\mathrm{X_Y,tx}) L(r_\mathrm{X_Y,rx}) 
\end{split} 
\label{I_BTPR}
\end{equation}
where the last equation follows from the facts that $\mathbb{E}_{\theta}\left[ \boldsymbol{z}_Y \boldsymbol{z}_Y^\mathrm{H} \right]$ is a diagonal matrix and $\mathbb{E}[b_{X_Y}^2(n)] = {1}/{2},\ \forall X_\mathrm{Y}$. Similarly, the powers of ${s}_\mathrm{PT}(t),\ {s}_\mathrm{BT}(t), \text{ and } {I}_\mathrm{PT}(t)$ can be represented as
\begin{align}
& \mathcal{S}_\mathrm{PT} = {g}_{Y_0} { L(r_0) P_\mathrm{tx}}, \label{S_PTPR} \\
& \mathcal{S}_\mathrm{BT} = \frac{\eta P_\mathrm{tx} }{2} \underset{X_{Y_0} \in \Phi_B(Y_0) }{\sum} {g}_\mathrm{X_{Y_0},tx} {g}_\mathrm{X_{Y_0},rx}  { L(r_\mathrm{X_{Y_0},tx}) L(r_\mathrm{X_{Y_0},rx})   }, \label{S_BTPR} \\
& \mathcal{I}_\mathrm{PT} = \underset{Y \in \Phi_P}{\sum} {g}_{Y} {L(r_Y) P_\mathrm{tx}}. \label{I_PTPR}
\end{align} 
We note that the mutual correlations among the signals ${s}_\mathrm{PT}(t),\ {s}_\mathrm{BT}(t),\ {I}_\mathrm{PT}(t), \text{ and } {I}_\mathrm{BT}(t)$ can be neglected if we decondition the correlations on the channel phases and use the fact that $s_Y(t)$'s are i.i.d. zero mean Gaussian. 

The typical PR aims to receive the typical PT's signal $s_{Y_0}(t)$, which is contained in ${s}_\mathrm{PT}(t)$ and ${s}_\mathrm{BT}(t)$. However, the time delays of the PT-BT-PR paths are larger than the time delay of the direct PT-PR path and are random due to the PCP formed by the BT locations, which may introduce different levels of inter symbol interference (ISI) at the typical PR. {\color{black}Therefore, ${s}_\mathrm{BT}(t)$ has a two-side effect, i.e., causing interference or enhancing the detection at the PR. Particularly, \cite{ruttik2018does} concludes that AmBC causes little interference to legacy systems in some deployment scenarios, and \cite{yang2018cooperative} indicates that detection performance can be enhanced with the cooperation of AmBC nodes. In this paper, we parameterize the two-side effect with $\beta \in [0,1]$, which denotes the fraction of the backscattered signal power that is not regarded as interference, and introduce an approach to estimate $\beta$ in Appendix A.}
Next, given $\beta \in [0,1]$, we can represent the SINR at the typical PR as
\begin{equation}
	\mathrm{SINR} = \frac{\mathcal{S}_\mathrm{PT} + \beta \mathcal{S}_\mathrm{BT} }{(1-\beta)\mathcal{S}_\mathrm{BT} + \mathcal{I}_\mathrm{PT} + \mathcal{I}_\mathrm{BT} + \sigma^2}
\end{equation}
and
\begin{equation}
\mathrm{SINR_u} = \frac{\mathcal{S}_\mathrm{PT} + \beta \mathcal{S}_\mathrm{BT} }{(1-\beta)\mathcal{S}_\mathrm{BT} + \mathcal{I}_\mathrm{PT} + \sigma^2 },
\end{equation}
for scenario-1 and scenario-2, respectively. 
Since $\mathrm{SINR} \le \mathrm{SINR_u}$, the coverage probability of scenario-2 (where BTs are only located around the typical PT) upper bounds the coverage probability of scenario-1 (where BTs are located around all PTs). We note that both scenarios can be interference-limited if the transmit signal-to-receive-noise ratio (TSRNR) ${P_\mathrm{tx}}/{\sigma^2}$ is large enough. Thus, we will also analyze the SIR based coverage probabilities for both scenarios and compare them with the SINR based results. 

\subsection{Coverage Probability Expression}

Denoting the SINR or SIR threshold at the typical PR as $\Gamma$, the coverage probability is defined as the probability that the SINR or SIR  is not less than the threshold:
\begin{equation}
\begin{split}
	\mathbb{P} \left(\mathrm{SINR} \ge \Gamma \right) = & \mathbb{P} ( \mathcal{S}_\mathrm{PT}  \ge  \left[ \Gamma(1-\beta) - \beta \right] \mathcal{S}_\mathrm{BT} \\
	& \qquad + \Gamma \mathcal{I}_\mathrm{PT} + \Gamma \mathcal{I}_\mathrm{BT} + \Gamma \sigma^2 ), \label{covprob_1_SINR} 
\end{split}
\end{equation} 

\begin{equation}
\begin{split}
	\mathbb{P} \left(\mathrm{SINR_u} \ge \Gamma \right) =  \mathbb{P} ( \mathcal{S}_\mathrm{PT} \ge &
	 \left[ \Gamma(1-\beta) - \beta \right] \mathcal{S}_\mathrm{BT} \\ &  + \Gamma \mathcal{I}_\mathrm{PT} + \Gamma \sigma^2), \label{covprob_1s_SINR} 
\end{split}
\end{equation}

\begin{equation}
\begin{split}
	 \mathbb{P} \left(\mathrm{SIR} \ge \Gamma \right) = \mathbb{P} ( \mathcal{S}_\mathrm{PT}  \ge  & \left[ \Gamma(1-\beta) - \beta \right] \mathcal{S}_\mathrm{BT} + \Gamma \mathcal{I}_\mathrm{PT} \\
	&  + \Gamma \mathcal{I}_\mathrm{BT} ), \label{covprob_1_SIR} 
\end{split}
\end{equation} 

\begin{equation}
\begin{split}
	\mathbb{P} \left(\mathrm{SIR_u} \ge \Gamma \right) = & \mathbb{P} ( \mathcal{S}_\mathrm{PT} \ge \left[ \Gamma(1-\beta) - \beta \right] \mathcal{S}_\mathrm{BT} + \Gamma \mathcal{I}_\mathrm{PT}  ). \label{covprob_1s_SIR} 
\end{split}
\end{equation}

To calculate the coverage probabilities, we will need the following lemmas:

\begin{lemma}
	To analyze the aggregated signal power at the typical PR, the clustered BTs can be approximately regarded as a virtual transmitter (VT) located at the center of the cluster. The VT's transmit power is the sum of backscattered signal powers $\tilde{P}_t$ of all BTs in the cluster, which is derived as
	\begin{equation}
	\begin{split}
		\tilde{P}_t & = \mathbb{E} \left[ \sum_{X_Y \in \Phi_B(Y) } g_\mathrm{X_Y, tx} L(r_\mathrm{X_Y,tx}) P_\mathrm{tx} \right] \\
		& \overset{\mathrm{(a)}}{=} \lambda_B \int_{B(Y,\rho)} \mathbb{E} \left[ g_\mathrm{X_Y,tx} \right] L(r_\mathrm{X_Y,tx}) d X_Y P_\mathrm{tx} \\
		& \overset{\mathrm{(b)}}{=} \frac{2 \pi}{\mu} \lambda_B \left( \int_{0}^{\rho} \frac{r}{r^\alpha + 1} dr \right) P_\mathrm{tx} = \gamma P_\mathrm{tx} 
	\end{split}		
	\end{equation} 
	where (a) is from the Campbell Theorem, (b) is by changing Cartesian coordinates to polar coordinates, and the last equality is achieved by defining $\gamma \triangleq \frac{2 \pi}{\mu} \lambda_B  \int_{0}^{\rho} \frac{r}{r^\alpha + 1} dr $.
\end{lemma}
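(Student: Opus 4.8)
The plan is to prove the displayed chain of identities directly; the phrase ``approximately regarded as a virtual transmitter'' only asserts that, for the \emph{aggregate} received power, the cluster behaves like a single point source of power $\tilde P_t$ located at the parent position $Y$, so the substantive task is just to evaluate $\tilde P_t$, and this is a routine first-moment (Campbell) calculation.

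\textbf{Step (a).} Conditioned on the parent position $Y$, the offspring set $\Phi_B(Y)$ is a homogeneous PPP of intensity $\lambda_B$ restricted to the disk $B(Y,\rho)$, and the fading powers $\{g_{X_Y,tx}\}_{X_Y\in\Phi_B(Y)}$ are i.i.d.\ $\exp(\mu)$ marks independent of the point locations. I would therefore invoke the marked version of the Campbell averaging theorem (equivalently, condition on the marks, apply ordinary Campbell, and then move the mark expectation inside), which is legitimate here because the summand is nonnegative and the point count in $B(Y,\rho)$ has finite mean $\lambda_B \pi \rho^2$, so the sum is a.s.\ finite with convergent expectation. This gives
\[
\mathbb{E}\!\left[\sum_{X_Y\in\Phi_B(Y)} g_{X_Y,tx}\,L(r_{X_Y,tx})\right] = \lambda_B\int_{B(Y,\rho)} \mathbb{E}\!\left[g_{X_Y,tx}\right] L(r_{X_Y,tx})\,dX_Y,
\]
and multiplying by the constant $P_{tx}$ yields line (a).

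\textbf{Step (b).} Using $\mathbb{E}[g_{X_Y,tx}]=1/\mu$, pull the constant out of the integral and switch to polar coordinates centred at $Y$: with $r=r_{X_Y,tx}=\|X_Y-Y\|$ one has $dX_Y=r\,dr\,d\phi$, $\phi\in[0,2\pi)$, $r\in[0,\rho]$, and since $L(r)=(1+r^\alpha)^{-1}$ depends only on $r$ the angular integral contributes a factor $2\pi$. Hence $\tilde P_t=\frac{2\pi}{\mu}\lambda_B\,\big(\int_0^\rho \tfrac{r}{r^\alpha+1}\,dr\big)\,P_{tx}$, and setting $\gamma\triangleq\frac{2\pi}{\mu}\lambda_B\int_0^\rho\tfrac{r}{r^\alpha+1}\,dr$ (finite, since the integrand is bounded on the compact interval $[0,\rho]$) gives $\tilde P_t=\gamma P_{tx}$.

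The computation itself is mechanical; the point that actually needs care is the scope of the word ``approximately.'' Replacing the cluster by one source at $Y$ is accurate for the received power at the typical PR only if (i) every offspring's path loss \emph{to the PR} is approximated by the common value $L(r_{Y,rx})$, which is reasonable because $\rho$ is small relative to the PT--PR separation, and (ii) the cross terms between distinct BTs' contributions average out, which is precisely the phase-deconditioning argument already used in obtaining $\mathcal I_\mathrm{BT}$ and $\mathcal S_\mathrm{BT}$ (independent uniform channel phases make $\mathbb{E}_\theta[\boldsymbol z_Y\boldsymbol z_Y^{\mathrm H}]$ diagonal, so powers add). I would state these two approximations explicitly as the meaning of ``approximately,'' with the exact identity $\tilde P_t=\gamma P_{tx}$ derived above as the precise quantitative content of the lemma. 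I expect reviewers to probe (i)--(ii) rather than the integral, so I would make that remark the emphasis of the write-up.
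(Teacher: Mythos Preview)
Your proposal is correct and follows exactly the paper's argument: the paper's ``proof'' of Lemma~1 is embedded in the statement itself and consists precisely of invoking Campbell's theorem for (a) and the polar-coordinate change with $\mathbb{E}[g_{X_Y,tx}]=1/\mu$ for (b). Your added justification of integrability and your explicit discussion of the two approximation sources (common path loss $L(r_{Y,rx})$ over the small disk and the phase-deconditioning that kills cross terms) go beyond what the paper writes down; the paper simply asserts the VT approximation and leaves its validity implicit.
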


\begin{lemma}
	The Laplace transform of the probability density function (PDF) of double fading random variable $g = g_1 g_2$, where $g_1, g_2 \sim \exp(\mu)$ is
	\begin{equation}
	\begin{split}
		\mathcal{L}_g(s) & = \mathbb{E}\left[ \exp(-sg) \right] =  \int_{0}^{\infty} e^{-s g} \int_{0}^{\infty} \frac{\mu^2}{t} e^{-\mu \left(t+\frac{g}{t}\right)} dt  dg  \\
		& = \int_{0}^{\infty} \frac{\mu^2 e^{-\mu t}}{s t + \mu} dt
	\end{split}
	\end{equation}
	where we use the fact that the PDF of $g$ is
	\begin{equation}
	f_{g}(g) = \int_{0}^{\infty} \frac{\mu^2}{t} e^{-\mu \left(t+\frac{g}{t}\right)} dt
	\end{equation}
	which can be derived according to the PDF of the product of two random variables.
	
\end{lemma}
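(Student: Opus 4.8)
The plan is to establish Lemma 2 in two stages: first derive the density $f_g$ of the product $g = g_1 g_2$, then integrate it against $e^{-sg}$ and interchange the order of integration so that the inner integral collapses to an elementary expression.

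For the density, I would invoke the standard change-of-variables formula for the product of two independent nonnegative random variables. Keeping $g_2 = t$ as an auxiliary variable, the map $(g_1,g_2)\mapsto(g,t)=(g_1 g_2, g_2)$ has Jacobian $t$, hence $f_g(g) = \int_0^\infty f_{g_1}(g/t)\, f_{g_2}(t)\, \tfrac{1}{t}\, dt$. Substituting $f_{g_1}(x) = f_{g_2}(x) = \mu e^{-\mu x}$ yields precisely $f_g(g) = \int_0^\infty \tfrac{\mu^2}{t} e^{-\mu(t + g/t)}\, dt$, as stated. Equivalently, one may condition on $g_2 = t$: since scaling an $\exp(\mu)$ variable by $t$ gives an $\exp(\mu/t)$ variable, the conditional law of $g$ given $g_2 = t$ is $\exp(\mu/t)$, and de-conditioning over the $\exp(\mu)$ law of $t$ reproduces the same integral representation for $f_g$.

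For the Laplace transform, I would write $\mathcal{L}_g(s) = \int_0^\infty e^{-sg} f_g(g)\, dg = \int_0^\infty\!\!\int_0^\infty e^{-sg}\, \tfrac{\mu^2}{t} e^{-\mu(t+g/t)}\, dt\, dg$ and swap the two integrals. Because the integrand is nonnegative and measurable, Tonelli's theorem makes the interchange unconditionally valid, so no integrability side condition is required. After the swap the inner integral in $g$ is $\int_0^\infty e^{-(s+\mu/t)g}\, dg = (s+\mu/t)^{-1}$, and multiplying by the prefactor $\mu^2 e^{-\mu t}/t$ and using $\tfrac{1}{t}(s+\mu/t)^{-1} = (st+\mu)^{-1}$ gives $\mathcal{L}_g(s) = \int_0^\infty \tfrac{\mu^2 e^{-\mu t}}{st+\mu}\, dt$. (The conditioning argument reaches the same line directly: $\mathbb{E}[e^{-sg}\mid g_2=t] = \tfrac{\mu/t}{s+\mu/t} = \tfrac{\mu}{st+\mu}$, and averaging over $t\sim\exp(\mu)$ gives the claimed integral.)

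I do not anticipate a genuine obstacle here. The only points needing care are not dropping the Jacobian factor $1/t$ in the product-density formula, and justifying the exchange of the order of integration — which is immediate from nonnegativity via Tonelli. The conditioning route may in fact be the cleaner exposition since it avoids writing $f_g$ at all, but the explicit density is presumably retained because it is reused later in the paper.
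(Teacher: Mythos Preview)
Your proposal is correct and follows essentially the same route as the paper: derive $f_g$ via the product-of-random-variables formula, then compute $\mathcal{L}_g(s)$ by swapping the order of integration so the inner $g$-integral collapses. The paper states this tersely within the lemma itself without a separate proof; your write-up is simply a more explicit and rigorous version (spelling out the Jacobian and invoking Tonelli), with the conditioning argument as a bonus alternative.
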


\begin{lemma}
	Denoting $G = \omega_1 g_1 + \omega_2 g_2$ as the nonnegative weighted sum of two independent exponential random variables, $g_i \sim \exp(\mu_i),\ i = 1,2$, where $\omega_i \ge 0$, the cumulative distribution function (CDF) of $G$ is
	\begin{equation}
	\begin{split}
		F_G(g) = 1 - \frac{\tilde{\mu}_1}{\tilde{\mu}_1 - \tilde{\mu}_2} e^{-\tilde{\mu}_2 g} + \frac{\tilde{\mu}_2}{\tilde{\mu}_1 - \tilde{\mu}_2} e^{-\tilde{\mu}_1 g}, \ g \ge 0
	\end{split}
	\end{equation}
	where $\tilde{\mu}_i = \mu_i / \omega_i, i = 1,2$.
\end{lemma}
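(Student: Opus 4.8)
The plan is to exploit the scaling property of the exponential law to turn $G$ into an ordinary sum of two \emph{independent} exponentials, and then read off $F_G$ by convolution (or, equivalently, by Laplace inversion together with partial fractions). First I would record that for $\omega_i>0$ and $g_i\sim\exp(\mu_i)$ the rescaled variable $\omega_i g_i$ is again exponential, since $\mathbb{P}(\omega_i g_i>x)=\mathbb{P}(g_i>x/\omega_i)=e^{-(\mu_i/\omega_i)x}$, so $\omega_i g_i\sim\exp(\tilde{\mu}_i)$ with $\tilde{\mu}_i=\mu_i/\omega_i$. Because $g_1$ and $g_2$ are independent, $G$ is therefore the sum of two independent exponential random variables with rates $\tilde{\mu}_1$ and $\tilde{\mu}_2$ (a two-stage hypoexponential/Erlang-type variable).

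Next, assuming the generic case $\tilde{\mu}_1\neq\tilde{\mu}_2$, I would compute the density of $G$ by convolution,
\begin{equation}
f_G(g)=\int_0^g \tilde{\mu}_1 e^{-\tilde{\mu}_1 u}\,\tilde{\mu}_2 e^{-\tilde{\mu}_2(g-u)}\,du=\frac{\tilde{\mu}_1\tilde{\mu}_2}{\tilde{\mu}_1-\tilde{\mu}_2}\left(e^{-\tilde{\mu}_2 g}-e^{-\tilde{\mu}_1 g}\right),\quad g\ge 0,
\end{equation}
and integrate on $[0,g]$; after collecting terms and using $\tfrac{\tilde{\mu}_2}{\tilde{\mu}_2-\tilde{\mu}_1}+\tfrac{\tilde{\mu}_1}{\tilde{\mu}_1-\tilde{\mu}_2}=1$, this yields exactly the stated $F_G$, with the normalization checks $F_G(0)=0$ and $F_G(\infty)=1$ confirming the signs. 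Equivalently—and this is the shorter route I would actually write up—one uses $\mathcal{L}_G(s)=\frac{\tilde{\mu}_1}{s+\tilde{\mu}_1}\cdot\frac{\tilde{\mu}_2}{s+\tilde{\mu}_2}$, performs the partial-fraction split $\frac{\tilde{\mu}_1\tilde{\mu}_2}{(s+\tilde{\mu}_1)(s+\tilde{\mu}_2)}=\frac{\tilde{\mu}_1\tilde{\mu}_2}{\tilde{\mu}_2-\tilde{\mu}_1}\cdot\frac{1}{s+\tilde{\mu}_1}+\frac{\tilde{\mu}_1\tilde{\mu}_2}{\tilde{\mu}_1-\tilde{\mu}_2}\cdot\frac{1}{s+\tilde{\mu}_2}$, inverts termwise to get $f_G$, and integrates.

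Nothing here is deep; the only point requiring care is the degenerate case $\tilde{\mu}_1=\tilde{\mu}_2$ (i.e. $\mu_1/\omega_1=\mu_2/\omega_2$), where the displayed formula is a $0/0$ indeterminate form. There $G$ is Erlang-$2$ with $F_G(g)=1-e^{-\tilde{\mu} g}-\tilde{\mu} g\,e^{-\tilde{\mu} g}$, which is recovered as the limit $\tilde{\mu}_2\to\tilde{\mu}_1$ of the formula by L'H\^{o}pital's rule; likewise, if a weight vanishes (say $\omega_2=0$) the sum collapses to a single exponential, consistent with letting $\tilde{\mu}_2\to\infty$. I would dispose of these boundary cases in a sentence and otherwise treat $\omega_i>0$, $\tilde{\mu}_1\neq\tilde{\mu}_2$ as the case used in the sequel. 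The only (very mild) obstacle is bookkeeping the partial-fraction and integration constants carefully so that the final expression matches the stated form term by term.
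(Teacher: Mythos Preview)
Your proposal is correct and follows essentially the same route as the paper: both use the scaling property to reduce $\omega_i g_i$ to an $\exp(\tilde{\mu}_i)$ variable and then compute the distribution of the sum of two independent exponentials. The paper's proof is marginally more direct, evaluating the CDF straight from the double integral $\int_0^g\int_0^{g-t_2}\tilde{\mu}_1 e^{-\tilde{\mu}_1 t_1}\tilde{\mu}_2 e^{-\tilde{\mu}_2 t_2}\,dt_1\,dt_2$ rather than first obtaining $f_G$ and then integrating, and it does not discuss the degenerate cases you mention; but these are cosmetic differences, not substantive ones.
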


\begin{proof}	
	Please see Appendix B.
\end{proof}
Now, we are ready to derive the coverage probability. For the two cases according to whether $\Gamma(1-\beta) - \beta$ is negative or not, we have the following theorems.
\begin{theorem}
	When $0 \le \beta \le \frac{\Gamma}{\Gamma + 1}$, i.e., $\Gamma(1-\beta) - \beta \ge 0$, the SINR and SIR based coverage probabilities for the two scenarios are
	\begin{align}
		& \mathbb{P}(\mathrm{SINR} \ge \Gamma) \approx \xi_1 \xi_2 \zeta_1, \quad \mathbb{P}(\mathrm{SINR_u} \ge \Gamma) = \xi_1 {\xi}_\mathrm{2,u} \zeta_1 	\\
		& \mathbb{P}(\mathrm{SIR} \ge \Gamma) \approx \xi_1 \xi_2, \qquad  \mathbb{P}(\mathrm{SIR_u} \ge \Gamma) = \xi_1 {\xi}_\mathrm{2,u} 		
	\end{align}
	where 
	\begin{align*}
		\begin{split}
		& \xi_1 = \\
		& \exp \left\{ -\lambda_B \int_{X_{Y_0} \in B(Y_0, \rho)} \left( 1 - \int_{0}^{\infty} \frac{\mu^2 e^{-\mu t}}{a_\mathrm{X_{Y_0}} t + \mu} dt \right) d X_{Y_0} \right\}, \\
		\end{split}\\
		\begin{split}
		& \xi_2 = \\
		& \exp \left\{ -2 \pi \lambda_P \int_{r_0}^{R} \left( 1 - \frac{1}{1 + \Gamma \frac{r_0^\alpha + 1}{r^\alpha + 1}} \frac{1}{1 + \gamma \Gamma \frac{r_0^\alpha + 1}{r^\alpha + 1}} \right) r dr \right\},
		\end{split} \\
		\begin{split}
		& \xi_\mathrm{2,u} = \exp \left\{ -2 \pi \lambda_P \int_{r_0}^{R} \left( 1 - \frac{1}{1 + \Gamma \frac{r_0^\alpha + 1}{r^\alpha + 1}} \right) r dr \right\},
		\end{split} \\
		\begin{split}
		& \zeta_1 = \exp \left( -\frac{\mu \sigma^2 \Gamma }{L(r_0)P_\mathrm{tx}} \right),
		\end{split}
	\end{align*}
	$a_{X_{Y_0}} = \frac{\mu \eta \left[ \Gamma(1-\beta) - \beta \right]}{2 L(r_0)}  L(r_{X_{Y_0},tx}) L(r_{X_{Y_0},r}) $, and $\gamma$ is defined in Lemma 1.
\end{theorem}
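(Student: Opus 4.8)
The plan is to begin from the rearranged identities \eqref{covprob_1_SINR}--\eqref{covprob_1s_SIR} and to exploit the single-hop exponential fading of the desired link. Since $\mathcal{S}_\mathrm{PT}=g_{Y_0}L(r_0)P_\mathrm{tx}$ with $g_{Y_0}\sim\exp(\mu)$, and since under the hypothesis $0\le\beta\le\Gamma/(\Gamma+1)$ the coefficient $\Gamma(1-\beta)-\beta$ is nonnegative, the right-hand side of each inequality is a nonnegative random variable $Z$ not involving $g_{Y_0}$; hence $\mathbb{P}(\mathcal{S}_\mathrm{PT}\ge Z)=\mathbb{E}[e^{-sZ}]$ with $s=\mu/(L(r_0)P_\mathrm{tx})$, i.e.\ every coverage probability equals a Laplace transform evaluated at $s$. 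It is precisely this sign condition on $\beta$ that licenses the step, which is why the complementary range is treated separately in the next theorem.

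I would then decompose $Z$ into the contribution of the typical cluster $\mathcal{S}_\mathrm{BT}$, of the interfering tier ($\mathcal{I}_\mathrm{PT}$, plus $\mathcal{I}_\mathrm{BT}$ in scenario-1), and of the deterministic noise; these are mutually independent, so the Laplace transform factors. The noise yields $\zeta_1=\exp(-\mu\sigma^2\Gamma/(L(r_0)P_\mathrm{tx}))$ at once and is simply dropped in the SIR cases. For $\mathcal{S}_\mathrm{BT}$ I would apply the PGFL of the offspring PPP of intensity $\lambda_B$ on $B(Y_0,\rho)$; absorbing the constants in the exponent, the per-point term is the Laplace transform of the product gain $g_{X_{Y_0},tx}g_{X_{Y_0},rx}$ at argument $a_{X_{Y_0}}=\frac{\mu\eta[\Gamma(1-\beta)-\beta]}{2L(r_0)}L(r_{X_{Y_0},tx})L(r_{X_{Y_0},rx})$, which by Lemma~2 equals $\int_0^\infty\mu^2 e^{-\mu t}/(a_{X_{Y_0}}t+\mu)\,dt$; this is exactly $\xi_1$.

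For the interference factor I would treat each ``PT with its BT cluster'' as an independent mark of $\Phi_\mathrm{P}$ on the annulus $B(\boldsymbol{0},R)\setminus B(\boldsymbol{0},r_0)$ and apply the PGFL of $\Phi_\mathrm{P}$. In scenario-2 there are no interfering BTs, so the per-point term is the single-fading Laplace transform $(1+\Gamma L(r)/L(r_0))^{-1}=(1+\Gamma\tfrac{r_0^\alpha+1}{r^\alpha+1})^{-1}$, and passing to polar coordinates yields $\xi_\mathrm{2,u}$ exactly. In scenario-1 I would invoke Lemma~1 to replace the doubly-faded shot noise produced inside each interfering cluster by one virtual transmitter at the cluster centre, of power $\gamma P_\mathrm{tx}$ and with a fresh $\exp(\mu)$ gain; the per-point term then becomes $(1+\Gamma L(r)/L(r_0))^{-1}(1+\gamma\Gamma L(r)/L(r_0))^{-1}$, giving $\xi_2$. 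Since this substitution is an approximation, the scenario-1 claims carry ``$\approx$'' while the scenario-2 claims are exact; assembling the factors delivers the four formulas.

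The hard part is this last step: a direct PGFL over the Mat\'ern cluster process with a doubly-faded interference sum gives a nested integral whose inner kernel depends on both the parent and offspring positions and does not close in elementary form, so the crux is justifying the virtual-transmitter reduction of Lemma~1 --- appealing to the small cluster radius $\rho$, the regime $\lambda_P\ll\lambda_B$, and first-moment matching of the aggregate received power --- and arguing that collapsing the two-hop fading of an interfering cluster to one hop does not appreciably change the Laplace transform at the argument of interest. A lesser point to check is that the factorization into $\xi_1$, $\xi_2$ (or $\xi_\mathrm{2,u}$) and $\zeta_1$ is exact under the model, i.e.\ that $g_{Y_0}$, the typical cluster, the interfering tier, and the noise are genuinely independent, which itself relies on the earlier deconditioning over the channel phases used to obtain \eqref{S_PTPR}--\eqref{I_PTPR}.
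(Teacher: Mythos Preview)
Your proposal is correct and follows essentially the same route as the paper's proof in Appendix~C: use the exponential CCDF of $g_{Y_0}$ (valid precisely because $\Gamma(1-\beta)-\beta\ge 0$ makes the threshold nonnegative) to write the coverage probability as a Laplace transform, factor by independence into the noise term $\zeta_1$, the typical-cluster term handled by the PGFL of $\Phi_B(Y_0)$ together with Lemma~2 to give $\xi_1$, and the interference term handled by the PGFL of $\Phi_P$ to give $\xi_2$ or $\xi_{2,\mathrm u}$, with Lemma~1's virtual-transmitter reduction supplying the only approximation in scenario-1. The sole cosmetic difference is ordering: the paper applies the VT replacement (its step~(b)) before invoking the exponential CCDF, whereas you invoke the Laplace transform first and then reduce inside the per-point PGFL kernel, but the two are equivalent.
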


\begin{proof}
	Please see Appendix C.
\end{proof}

Theorem 1 indicates that the coverage probabilities are the multiplications of two or three specific terms when $\beta$ is not greater than the threshold $\frac{\Gamma}{\Gamma + 1}$. In particular, $\xi_1$ corresponds to the effect of BTs around the typical PT, $\xi_2$ corresponds to the interference effect of atypical PTs and their surrounding BTs for scenario-1, $\xi_\mathrm{2,u}$ corresponds to the interference effect of atypical PTs for scenario-2, and $\zeta_1$ corresponds to the noise effect. Furthermore, the coverage probabilities in scenario-1 are upper bounded by the coverage probabilities in scenario-2 since $\xi_2<\xi_\mathrm{2,u}$.

\begin{theorem}
	When $\frac{\Gamma}{\Gamma + 1} < \beta \le 1$, i.e., $\Gamma(1-\beta) - \beta < 0$, the SINR and SIR based coverage probabilities of the two scenarios are
	\begin{align}
		& \mathbb{P}(\mathrm{SINR} \ge \Gamma) \approx  \frac{\tilde{\gamma}}{\tilde{\gamma}-1} \xi_3 \zeta_2 - 
		\frac{1}{\tilde{\gamma}-1} \xi_2 \zeta_1, \\
		& \mathbb{P}(\mathrm{SINR_u} \ge \Gamma) \approx  \frac{\tilde{\gamma}}{\tilde{\gamma}-1} \xi_\mathrm{3,u} \zeta_2 - \frac{1}{\tilde{\gamma}-1} \xi_\mathrm{2,u} \zeta_1, \\
		& \mathbb{P}(\mathrm{SIR} \ge \Gamma) \approx  \frac{\tilde{\gamma}}{\tilde{\gamma}-1} \xi_3 - 
		\frac{1}{\tilde{\gamma}-1} \xi_2, \\
		& \mathbb{P}(\mathrm{SIR_u} \ge \Gamma) \approx  \frac{\tilde{\gamma}}{\tilde{\gamma}-1} \xi_\mathrm{3,u} - \frac{1}{\tilde{\gamma}-1} \xi_\mathrm{2,u}    
	\end{align}
	where
	\begin{align*}
		\begin{split}
		&\xi_3 = \\
		& \exp \Bigg\{ -2 \pi \lambda_P \int_{r_0}^{R} \left( 1 - \frac{1}{1 + \frac{\Gamma (r_0^\alpha + 1)}{\tilde{\gamma}(r^\alpha + 1)}} \frac{1}{1 +  \frac{\gamma \Gamma(r_0^\alpha + 1)}{\tilde{\gamma}(r^\alpha + 1)}} \right) r dr \Bigg\}, 
		\end{split} \\
		\begin{split}
		&\xi_\mathrm{3,u} =  \exp \Bigg\{ -2 \pi \lambda_P \int_{r_0}^{R} \left( 1 - \frac{1}{1 + \frac{\Gamma (r_0^\alpha + 1)}{\tilde{\gamma}(r^\alpha + 1)}} \right) r dr \Bigg\}, 
		\end{split} \\
		\begin{split}
		& \zeta_2 = \exp \left( -\frac{\mu \sigma^2 \Gamma }{ \tilde{\gamma} L(r_0) P_\mathrm{tx} } \right),
		\end{split}
	\end{align*}
	$\tilde{\gamma} = -\left[ \Gamma(1-\beta) -\beta \right] \gamma > 0$, and $\gamma$ is defined in Lemma 1.
\end{theorem}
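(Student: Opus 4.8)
The plan is to repeat the conditioning-and-PGFL argument behind Theorem 1, but now to use the \emph{negative} sign of $\Gamma(1-\beta)-\beta$ to move $\mathcal{S}_\mathrm{BT}$ onto the signal side. With $\tilde\gamma\triangleq-[\Gamma(1-\beta)-\beta]\,\gamma>0$, so that $-[\Gamma(1-\beta)-\beta]=\tilde\gamma/\gamma$, the event in (\ref{covprob_1_SINR}) is $\{\mathcal{S}_\mathrm{PT}+(\tilde\gamma/\gamma)\,\mathcal{S}_\mathrm{BT}\ge \Gamma(\mathcal{I}_\mathrm{PT}+\mathcal{I}_\mathrm{BT}+\sigma^2)\}$, and (\ref{covprob_1s_SINR}), (\ref{covprob_1_SIR}), (\ref{covprob_1s_SIR}) are the same event after deleting $\mathcal{I}_\mathrm{BT}$ and/or $\sigma^2$. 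First I would invoke Lemma 1 to collapse the clustered backscatter power $\mathcal{S}_\mathrm{BT}$ into the contribution of a single virtual transmitter co-located with the typical PT, i.e. $\mathcal{S}_\mathrm{BT}\approx \gamma\,\tilde g_{Y_0}\,L(r_0)P_\mathrm{tx}$ for a fresh $\tilde g_{Y_0}\sim\exp(\mu)$ independent of $g_{Y_0}$, and likewise collapse each atypical cluster, so that $\mathcal{I}_\mathrm{BT}\approx\sum_{Y\in\Phi_\mathrm{P}}\gamma\,\tilde g_{Y}\,L(r_Y)P_\mathrm{tx}$ with i.i.d.\ $\tilde g_Y\sim\exp(\mu)$. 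This single-VT collapse is the only approximation and is what the ``$\approx$'' in the statement reflects. Note that because the VT sits at $Y_0$, its weight $\gamma L(r_0)P_\mathrm{tx}$ shares the factor $L(r_0)P_\mathrm{tx}$ with $\mathcal{S}_\mathrm{PT}=g_{Y_0}L(r_0)P_\mathrm{tx}$, which is why the constants telescope cleanly below.

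After this reduction $\mathcal{S}_\mathrm{PT}+(\tilde\gamma/\gamma)\mathcal{S}_\mathrm{BT}$ is a nonnegative weighted sum of the two independent exponentials $g_{Y_0},\tilde g_{Y_0}$ with weights $L(r_0)P_\mathrm{tx}$ and $\tilde\gamma L(r_0)P_\mathrm{tx}$, so Lemma 3 (with $\tilde\mu_1=\mu/(L(r_0)P_\mathrm{tx})$, $\tilde\mu_2=\mu/(\tilde\gamma L(r_0)P_\mathrm{tx})$, ratio $\tilde\mu_1/\tilde\mu_2=\tilde\gamma$; the degenerate case $\tilde\gamma=1$ being ignored) gives, conditionally on the interference,
\[
\mathbb{P}\!\left(\mathcal{S}_\mathrm{PT}+\tfrac{\tilde\gamma}{\gamma}\mathcal{S}_\mathrm{BT}\ge x\right)=\tfrac{\tilde\gamma}{\tilde\gamma-1}e^{-\tilde\mu_2 x}-\tfrac{1}{\tilde\gamma-1}e^{-\tilde\mu_1 x},\qquad x=\Gamma(\mathcal{I}_\mathrm{PT}+\mathcal{I}_\mathrm{BT}+\sigma^2).
\]
It then remains to take expectations. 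The noise terms split off as the deterministic $e^{-\tilde\mu_2\Gamma\sigma^2}=\zeta_2$ and $e^{-\tilde\mu_1\Gamma\sigma^2}=\zeta_1$. For the interference I would condition on $\Phi_\mathrm{P}$, factor over the atypical PTs using independence of all fading variables, evaluate each per-node Laplace transform of $\exp(\mu)$ (yielding $1/(1+\tfrac{\Gamma L(r_Y)}{\tilde\gamma L(r_0)})$ for the direct path and $1/(1+\tfrac{\gamma\Gamma L(r_Y)}{\tilde\gamma L(r_0)})$ for the virtual backscatter transmitter in the $\tilde\mu_2$ term, and the same with $\tilde\gamma$ removed in the $\tilde\mu_1$ term), and apply the PGFL of $\Phi_\mathrm{P}$ on $B(\boldsymbol{0},R)\setminus B(\boldsymbol{0},r_0)$ in polar coordinates with $L(r)/L(r_0)=(r_0^\alpha+1)/(r^\alpha+1)$. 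This returns $\xi_3$ (resp.\ $\xi_\mathrm{3,u}$ when $\mathcal{I}_\mathrm{BT}$ is absent) on the $\zeta_2$ term and $\xi_2$ (resp.\ $\xi_\mathrm{2,u}$) on the $\zeta_1$ term, which is exactly the claimed identity; the SIR versions are the $\sigma^2\to 0$ specialization.

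The main obstacle is the first step: justifying that the clustered backscatter power may be replaced by a single exponentially-faded virtual transmitter whose mean is matched through Lemma 1. The exact law of $\mathcal{S}_\mathrm{BT}$ is that of a sum of products of two $\exp(\mu)$ variables over a Mat\'ern cluster; it has no tractable closed form and, crucially, does not combine with the independent exponential $\mathcal{S}_\mathrm{PT}$ in a way Lemma 3 can digest, so the matched-mean exponential surrogate is what makes the computation go through, and is the source of the approximation error (to be assessed numerically). Secondarily, one must confirm that after deconditioning on the channel phases as in Section III-A the cross-correlations among $\mathcal{S}_\mathrm{PT},\mathcal{S}_\mathrm{BT},\mathcal{I}_\mathrm{PT},\mathcal{I}_\mathrm{BT}$ are negligible, and that $\Phi_\mathrm{B}$ is conditionally a homogeneous PPP inside each parent's disk, so that the factorizations and the PGFL step are legitimate; the remaining algebra is routine.
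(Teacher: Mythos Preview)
Your proposal is correct and follows essentially the same route as the paper's proof in Appendix~D: collapse the typical and atypical backscatter clusters into virtual transmitters via Lemma~1, apply the CCDF of Lemma~3 to the resulting weighted sum $g_{Y_0}+\tilde\gamma\,\tilde g_{Y_0}$, and then decondition on $\Phi_\mathrm{P}$ through the PGFL in polar coordinates to recover $\xi_2,\xi_3,\xi_{2,\mathrm u},\xi_{3,\mathrm u}$, with the SIR results as the $\sigma^2\to 0$ limit. Your identification of the single-VT surrogate for $\mathcal{S}_\mathrm{BT}$ as the only genuine approximation (and the reason Lemma~3 becomes applicable) is exactly the point the paper makes when noting that step~(c) of~(\ref{cov prob 2}) fails for $\Gamma(1-\beta)-\beta<0$.
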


\begin{proof}
	Please see Appendix D.
\end{proof}

Different from the simple multiplication forms of Theorem~1 where each term in the multiplication corresponds to a specific effect, the coverage probabilities for $\beta$ greater than the threshold $\frac{\Gamma}{\Gamma + 1}$ are more complicated. Specifically, the coverage probabilities in Theorem 2 are expressed by weighted sums of the multiplications among $\xi_2$, $\xi_\mathrm{2,u}$, $\xi_3$, $\xi_\mathrm{3,u}$, $\zeta_1$, and $\zeta_2$ since we use a VT to approximate the BTs around the typical PT. In this case, the VT's effect is embodied by $\tilde{\gamma}$, $\xi_3$, $\xi_\mathrm{3,u}$, and $\zeta_2$. Furthermore, as ${P_\mathrm{tx}}/{\sigma^2}$ increases toward infinity, $\zeta_1$ and $\zeta_2$ approach 1, so that the SINR based coverage probabilities will converge to the SIR based coverage probabilities for an arbitrary $\beta$.

\section{Numerical Results} 

\begin{figure}[!tbp]
	\centering
	\includegraphics[width = 1\linewidth]{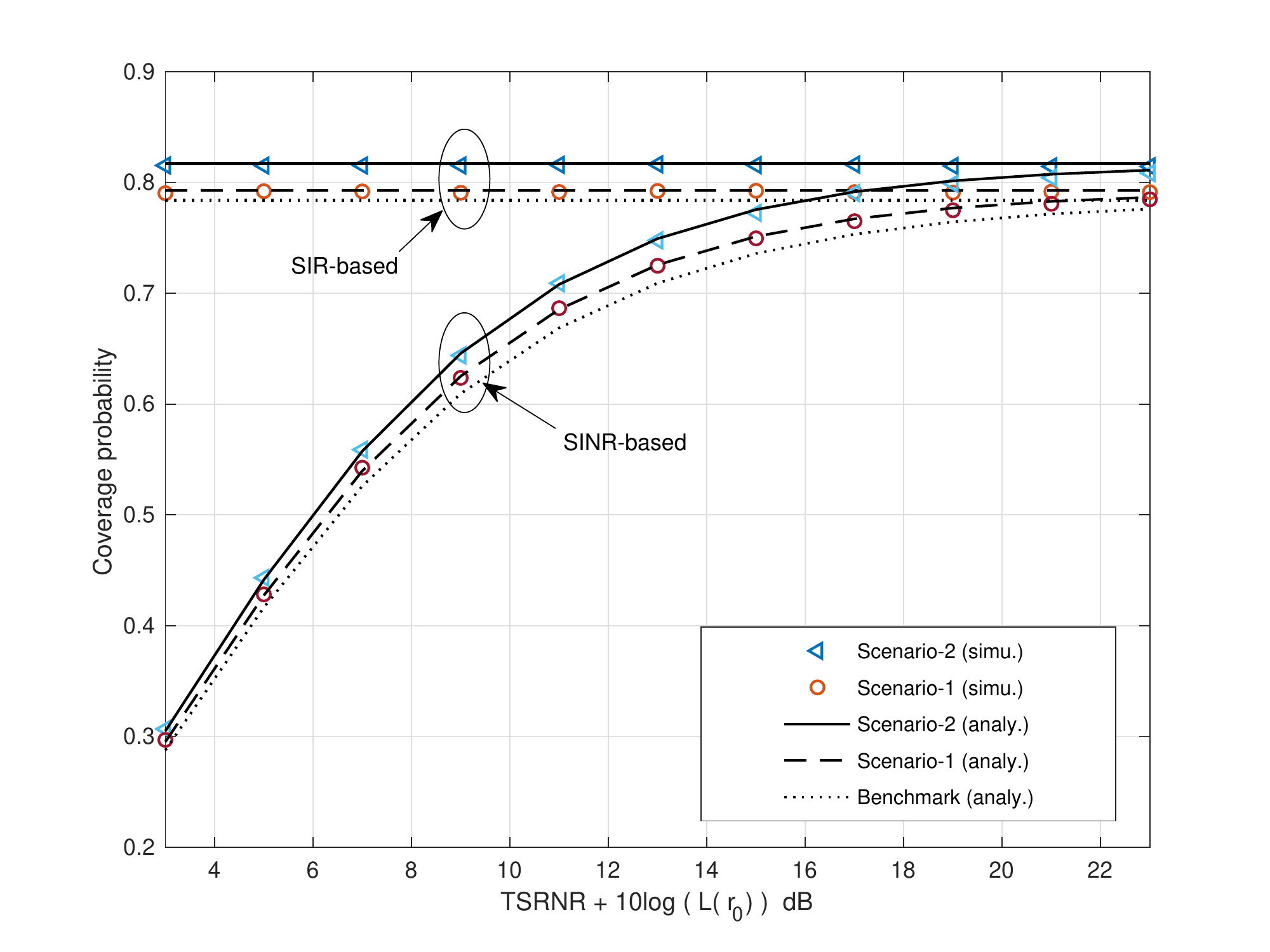}
	\caption{Coverage probability versus TSRNR deducting the absolute value of path loss. ($r_0 = 15, \Gamma = 3 \mathrm{dB}, \mu = 1, \beta = 0.8$)  }
	\label{fig_covprob_snr}
	\hfill
\end{figure}

\begin{figure}[!tbp]
	\centering
	\includegraphics[width = 1\linewidth]{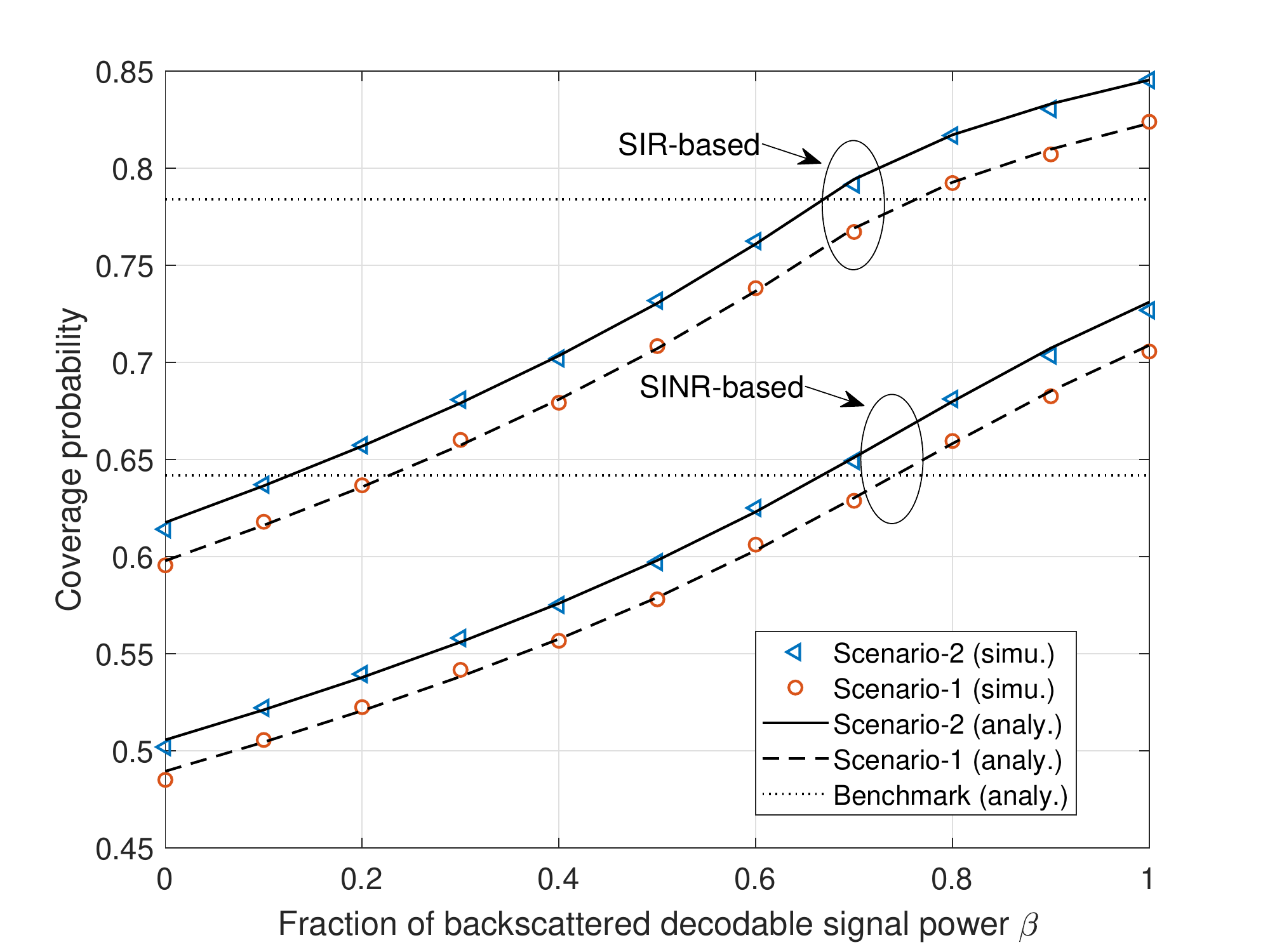}
	\caption{Coverage probability versus useful signal power ratio $\beta$. ($r_0 = 15, \Gamma = 3 \mathrm{dB}, \mu = 1, \mathrm{TSRNR} = 51 \mathrm{dB}$) }
	\label{fig_covprob_beta}
	\hfill
\end{figure}

The simulation results of both scenario-1 and scenario-2 are shown in Fig.~\ref{fig_covprob_beta}-\ref{fig_covprob_gamma} where we use Monte Carlo simulations with $50000$ independent system realizations to verify the analytical results. We set $\lambda_P = 2 \times 10^{-4}, \lambda_B = 0.1, \rho = 10, R = 100, \eta = 0.5, \alpha = 3.5$ by default. Other system settings are illustrated in the captions of the figures, where the distance metric unit is meter. Additionally, we also compare our results with the classic scenario (as a benchmark) where no BT exists (i.e., the interference received by the typical PR is only from atypical PTs). 	

The effect of TSRNR on the coverage probabilities is shown in Fig.~\ref{fig_covprob_snr}, where the horizontal axis represents the TSRNR after deducting the absolute value of path loss (which is $10\log_{10}(L(r_0)^{-1}) = 41$ dB). Clearly, the SIR based coverage probabilities are not affected by the TSRNR. However, the SINR based coverage probabilities gradually increase with the growth of TSRNR, finally converge to the SIR curves. With the listed system parameters, we observe that the TSRNR should be no less than 23 dB to make the SIR based coverage probabilities as accurate as the SINR based results. We set $r_0 = 15$ as a standard value for comparison and use TSRNR = 51 dB, i.e., the TSRNR after deducting the absolute value of path loss is 10 dB  (except for Fig.~\ref{fig_covprob_r0} where $r_0$ is a variable).

Fig. \ref{fig_covprob_beta} shows the coverage probabilities for different values of $\beta$. From 0 to 1, the value of $\beta$ indicates the fraction of backscattered signal power from the typical PT that can enhance the SINR and SIR at the typical PR. With the growth of $\beta$, the coverage probabilities of both scenario-1 and scenario-2 increase. In addition, when $\beta$ is larger than a certain value, the coverage probabilities of scenario-1 and scenario-2 exceeds the coverage probability of the benchmark scenario. These results correspond to the fact that the effect of backscattered signals has two sides: interference inducing and signal enhancing, i.e., the interference dominates when $\beta$ is less than the specific value, while the decodable signals dominate when $\beta$ is greater than that value.

\begin{figure}[!tbp]
	\centering
	\includegraphics[width = 1\linewidth]{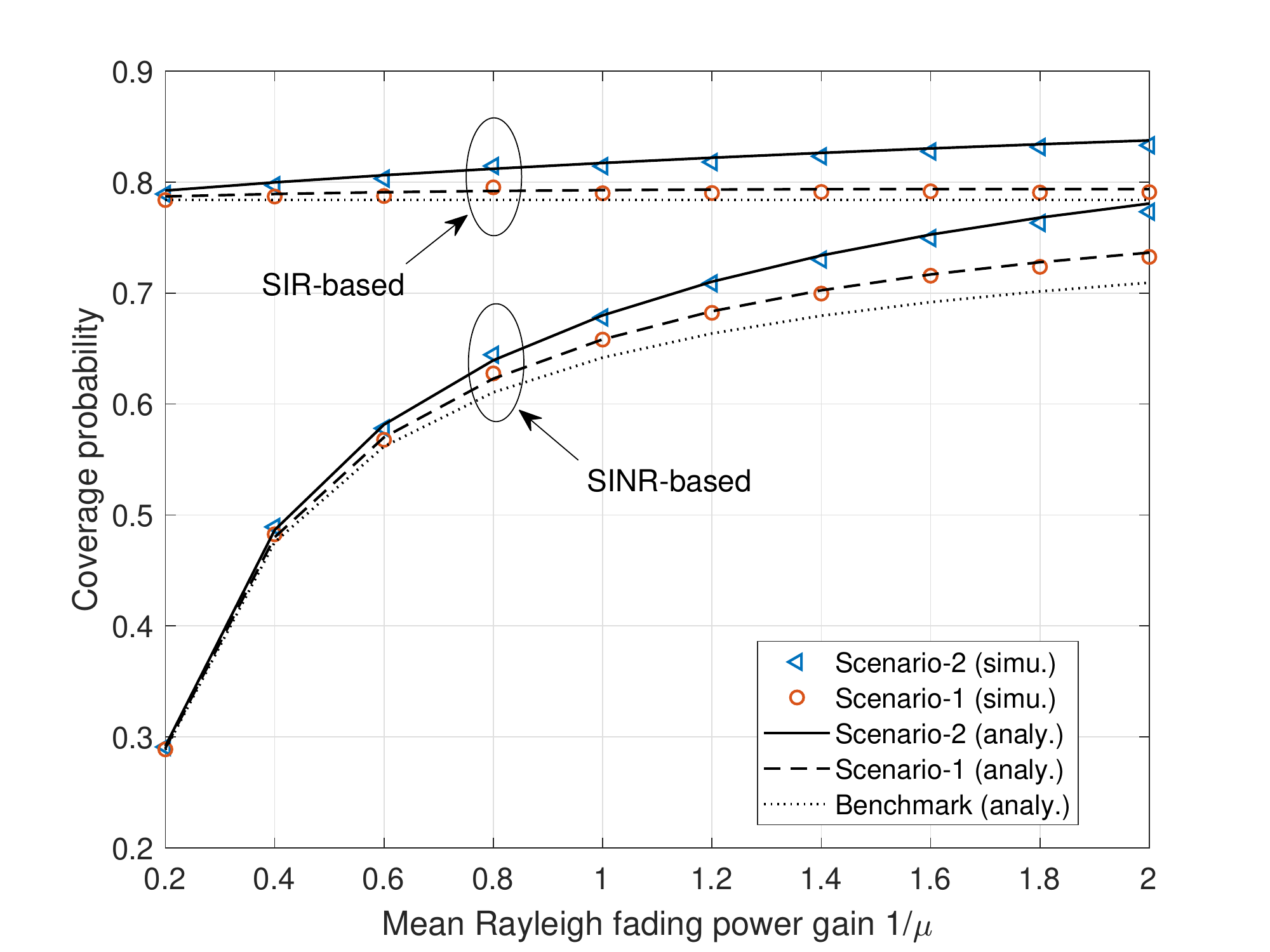}
	\caption{Coverage probability versus mean fading power gain. ($r_0 = 15, \Gamma = 3  \mathrm{dB}, \beta = 0.8, \mathrm{TSRNR} = 51 \mathrm{dB}$) }
	\label{fig_covprob_mean}
	\hfill
\end{figure}

\begin{figure}[!tbp]
	\centering
	\includegraphics[width = 1\linewidth]{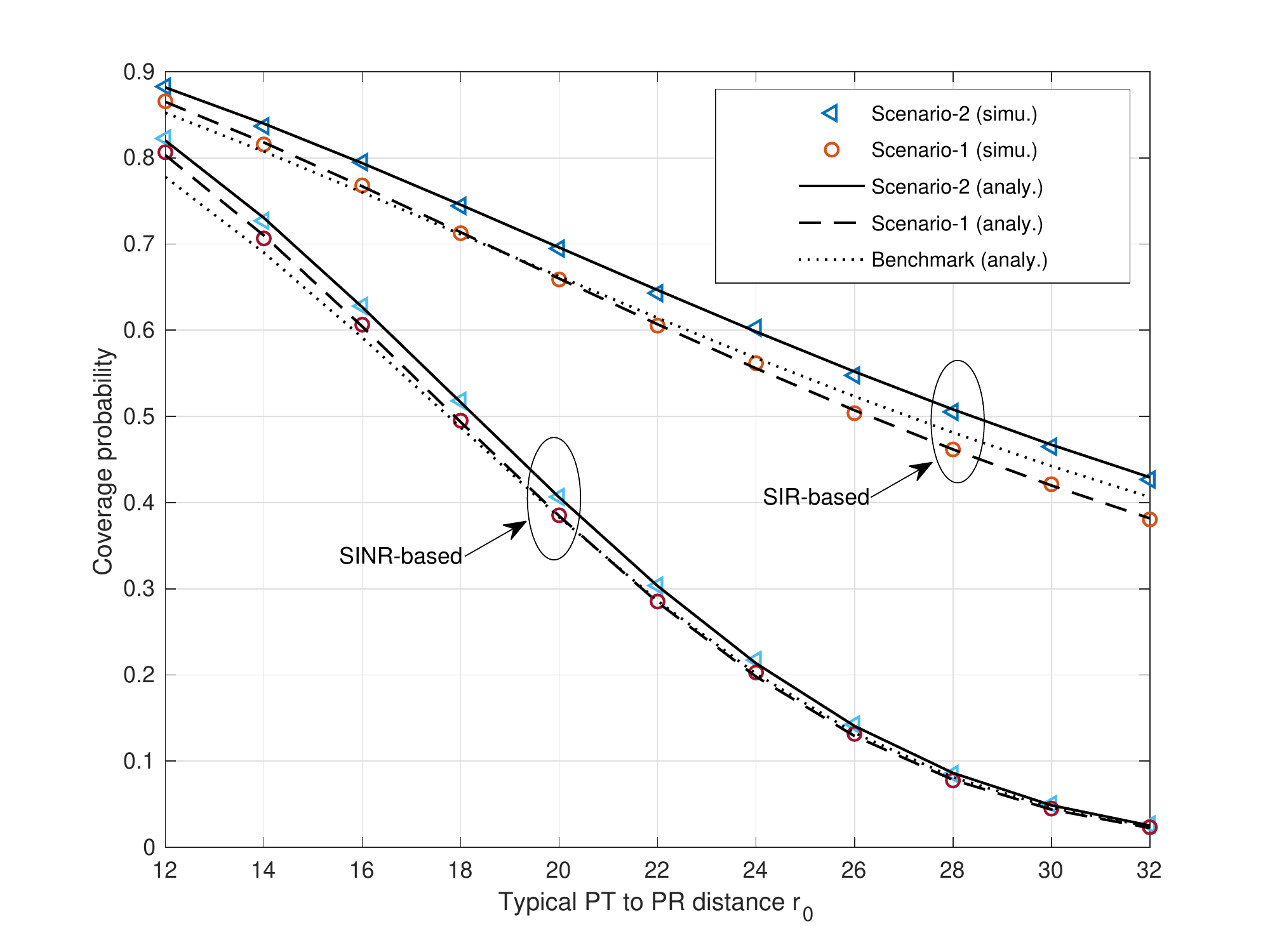}
	\caption{Coverage probability versus typical PT to PR distance $r_0$. ($\Gamma = 3 \mathrm{dB}, \beta = 0.8, \mu = 1, \mathrm{TSRNR} = 51 \mathrm{dB}$) }
	\label{fig_covprob_r0}
	\hfill
\end{figure}

\begin{figure}[!tbp]
	\centering
	\includegraphics[width = 1\linewidth]{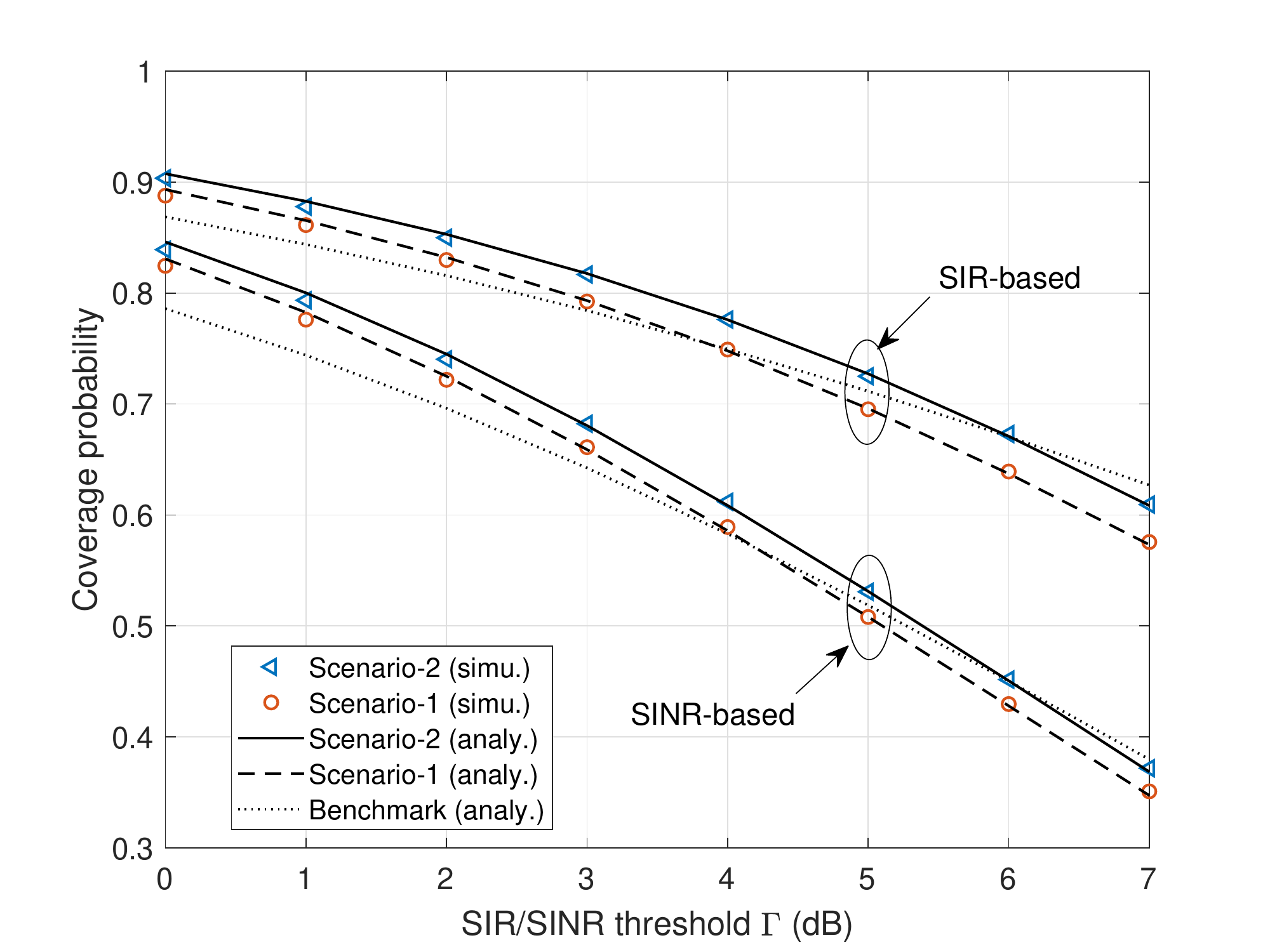}
	\caption{Coverage probability versus SIR threshold $\Gamma$. ($r_0 = 15, \beta = 0.8, \mu = 1, \mathrm{TSRNR} = 51  \mathrm{dB}$) }
	\label{fig_covprob_gamma}
	\hfill
\end{figure}

Typically, the mean power gain $1/\mu$ can be canceled in deriving the SIR based coverage probabilities for most of the network models (e.g., the benchmark scenario), if the channels are described as i.i.d. Rayleigh fading. However, in the considered two AmBC network scenarios, the channel power gains of the PT-BT path and the BT-PR path are multiplied due to the double fading effect, making the coverage probabilities more sensitive to the channel fading gain. As shown in Fig.~\ref{fig_covprob_mean}, the SIR based coverage probability of the benchmark scenario does not change with the fading power gain, but the SIR based probabilities for scenario-1 and scenario-2 increase with the growth of $1/\mu$. As the fading power gain grows, the SINR based coverage probabilities of all three scenarios increase, and tend to converge to the SIR based results since the noise becomes less significant. Furthermore, we observe that the increase of SINR and SIR based coverage probabilities are ranked as: scenario-2 $>$ scenario-1 $>$ benchmark scenario, as $1/\mu$ grows. This happens due to that 1) $\beta = 0.8$ is greater than the threshold $\frac{\Gamma}{\Gamma+1}$, thus the signal enhancing effect of the backscattered signals of a typical PT dominates (as Fig.~\ref{fig_covprob_beta} shows), and 2) there is less interference to a typical PR in scenario-2 than in scenario-1.

Fig.~\ref{fig_covprob_r0} shows that the coverage probabilities decrease with the increase of the typical PT to PR distance $r_0$. Moreover, as $r_0$ increases, BTs' signal enhancing effect becomes less significant than their interference effect, leading to the coverage probability of the benchmark scenario exceeds the coverage probability of scenario-1. Besides, the SINR based curves have steeper inclinations than the SIR based curves do for $r_0 < 26$, but gentler inclinations for $r_0 > 28$. This is because as the distance $r_0$ increases, the signal power from the typical PT and its surrounding BTs decreases exponentially, resulting in the SINR being dominated by noise.

It is observed in Fig. \ref{fig_covprob_gamma} that the coverage probabilities decrease as the SINR or SIR threshold $\Gamma$ increases. Moreover, with the growth of $\Gamma$, BTs' signal enhancing effect becomes less significant than their interference effect, leading to the coverage probability of the benchmark scenario exceeds the coverage probability of scenario-1. Besides, the SINR based curves always have greater inclinations than the SIR based curves do, due to the noise effect formulated by $\zeta_1$ and $\zeta_2$ in Theorem 2.


Furthermore, as shown in Fig.~\ref{fig_covprob_snr} - Fig.~\ref{fig_covprob_gamma}, the coverage probabilities of scenario-1 are always upper bounded by the results of scenario-2 as expected. We also notice that the simulation results almost perfectly match the analytical results, which indicates the VT approximation performs well.

\section{Conclusions}

We investigated new large-scale wireless network schemes, where AmBC nodes are involved in the conventional PPP-based communication model. In the new network scheme, the backscattered signals are regarded as either interference or decodable signals and we parameterized this effect with a fraction $\beta$. Depending on the value of $\beta$, SINR and SIR based coverage probabilities for two scenarios, where the BTs operate around either a single PT or all PTs, are derived in integral forms with the key system parameters. To make this work self-contained, we also provided an approach to estimate $\beta$. In summary, the coverage probability of our considered scheme lies in a wide range around the coverage probability of the conventional model, depending on the system settings, especially the value of $\beta$. Numerical results indicate that it is possible for the conventional communication system to keep satisfiable coverage probability while enabling backscatter communications for secondary systems if many AmBC nodes are involved in the large-scale wireless network.

\section*{Appendix A. \\ A Method to Estimate $\beta$ }

\begin{figure}[!tbp] 
	\centering
	\includegraphics[width = 0.9\linewidth]{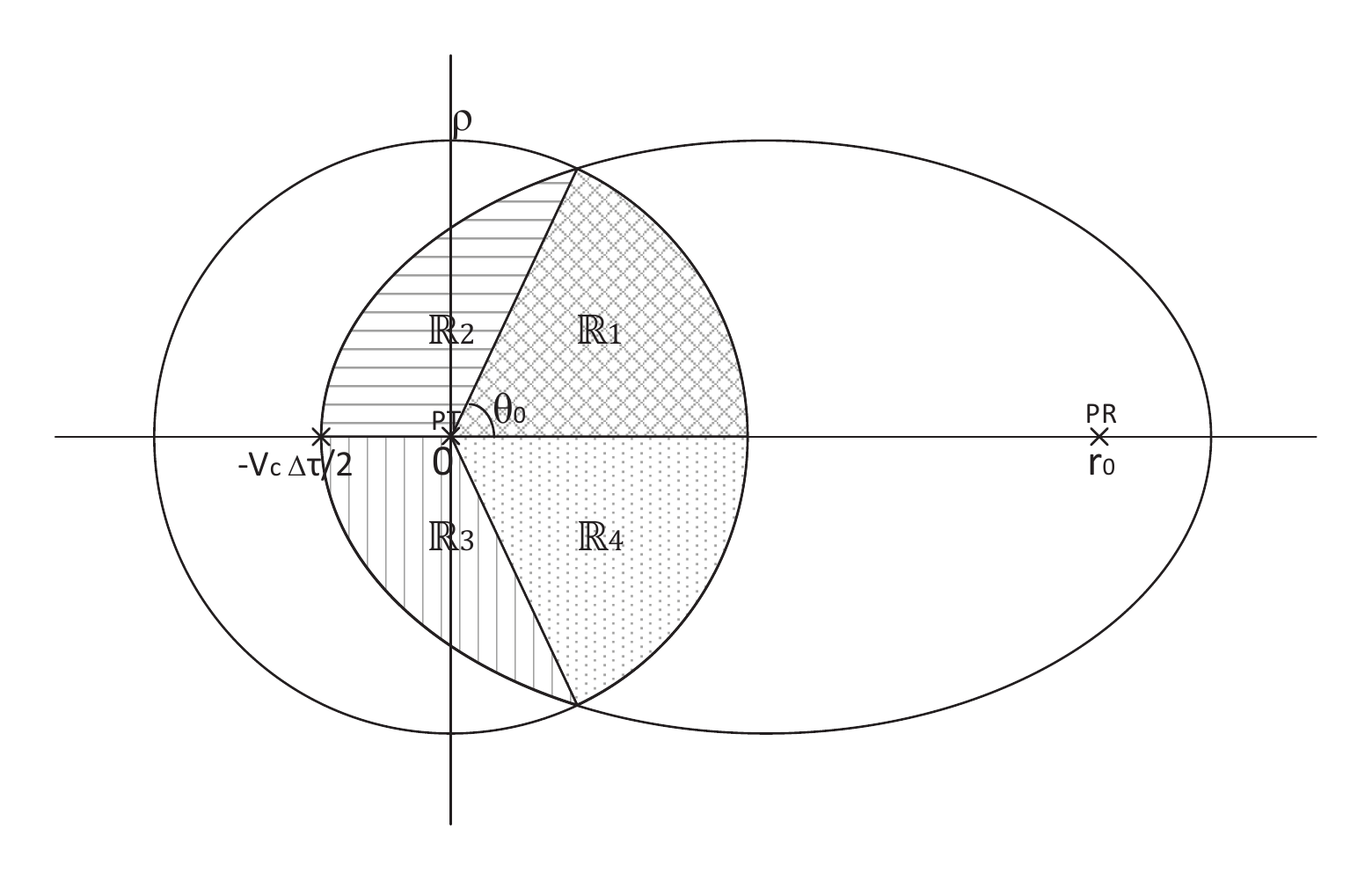} 
	\caption{Integration area in computing $\beta$} 
	\label{fig1}
	\hfill
\end{figure}

This part introduces an approach to compute $\beta$ with the available system parameters such as the distance between a typical PT and PR, the distribution range of the BTs, the symbol bandwidth of a PT, and the maximum tolerable delay for a typical PR.

First, denote the aggregated power of signals from the offsprings of a typical PT to be $\mathcal{S}_\mathrm{BT} = \mathcal{S}^{(1)}_\mathrm{BT} + \mathcal{S}^{(2)}_\mathrm{BT}$, where $\mathcal{S}_\mathrm{BT}^{(1)} = \beta \mathcal{S}_\mathrm{BT}$ represents the backscattered signal power that is not regarded as interference, while $\mathcal{S}_\mathrm{BT}^{(2)} = (1-\beta) \mathcal{S}_\mathrm{BT}$ is the counterpart that is regarded as interference. Since $\mathcal{S}^{(i)}_\mathrm{BT}, i = 1,2$ is the aggregated power of  signals from a relatively large  number of BTs, $\beta$ can be written approximately as the ratio between the two expected signal powers
\begin{equation} 
\beta = \frac{\mathcal{S}^{(1)}_\mathrm{BT}}{\mathcal{S}_\mathrm{BT}} \approx  \frac{ \mathbb{E}[\mathcal{S}_\mathrm{BT}^{(1)}]}{\mathbb{E}[\mathcal{S}_\mathrm{BT}]}.
\label{beta}
\end{equation}

Next, to compute $\mathbb{E} [ \mathcal{S}_\mathrm{BT}^{(1)} ]$, we assume a typical PT is located at the origin, the offspring BTs are located in the circle $B(\boldsymbol{0}, \rho)$, and the typical PR is located at $(0, r_0)$ {\color{black}as shown in Fig.~\ref{fig1}}. Then, we denote $\Delta \tau = k/\Delta B$ to be the maximum tolerable symbol delay at a typical PR, where $\Delta B$ represents the bandwidth of a symbol sent from the typical PT and $k$ is a scalar which indicates the ratio of the maximum tolerable delay to the symbol duration. Consequently, the BTs that can help to enhance the PR's receiving SINR and SIR are located in the intersection {\color{black}(shown as the shadow areas $\cup_{i=1}^4 \mathbb{R}_i$ in Fig.~\ref{fig1})} of a circular region $B(\boldsymbol{0}, \rho)$ and a ellipse region with the foci at $(0,0)$ and $(0,r_0)$, and a semi-major axes length of $l = \frac{1}{2} \left( {r_0} + {v_c \Delta \tau} \right)$, where $v_c = 3 \times 10^8 \mathrm{m/s}$ is the speed of radio waves. Then, expressions of the circle and the ellipse can be represented in polar coordinate\footnote{\color{black}We have a slight abuse of symbol $\theta$, which represents the channel phase in the system model. Here, $\theta$ represents the angel in a polar coordinate.} as
\begin{align}
r = \rho \label{circle},
\end{align}
and
\begin{align}
r = \frac{l(1-\epsilon^2)}{1-\epsilon \times \cos\theta} \label{ellipse}  
\end{align}
respectively, where $\epsilon = \frac{r_0}{2l}$ is the eccentricity of the ellipse. Equating (\ref{circle}) and (\ref{ellipse}), we obtain the intersection points $(\rho, \theta_0)$ and $(\rho, -\theta_0)$, where $\theta_0 = \cos^{-1} \left( \frac{1}{\epsilon} - \frac{l(1-\epsilon^2)}{\epsilon \rho} \right)$. Thus, we obtain (\ref{power1}) {\color{black}shown on top of next page},
\begin{figure*}
	\begin{equation} 
	\begin{split}
	&\mathbb{E}[\mathcal{S}_\mathrm{BT}^{(1)}] = \mathbb{E}\left[ \frac{\eta P_\mathrm{tx} }{2} \underset{X_{Y_0} \in \Phi_B(Y_0) \cap \mathbb{D} }{\sum} {g}_\mathrm{X_{Y_0},tx} {g}_\mathrm{X_{Y_0},rx}  { L(r_\mathrm{X_{Y_0},tx}) L(r_\mathrm{X_{Y_0},rx})   } \right]  \overset{\mathrm{(a)}}{=} \frac{\eta P_\mathrm{tx} }{2} \mathbb{E}[{g}_\mathrm{X_{Y_0},tx} {g}_\mathrm{X_{Y_0},rx}] \int_{B(\boldsymbol{0}, \rho) \cap \mathbb{D}} L(r_\mathrm{X_{Y_0},tx}) L(r_\mathrm{X_{Y_0},rx}) dX_{Y_0} \\
	& \overset{\mathrm{(b)}}{=}  \frac{\eta P_\mathrm{tx} }{ \mu^2} \int_{ \mathbb{R}_1\cup \mathbb{R}_2} L(r) L((r^2 + r_0^2 - 2 r r_0 \cos \theta)^{1/2}) r dr d\theta  \overset{\mathrm{(c)}}{=}  \frac{\eta P_\mathrm{tx} }{ \mu^2} \left( \int_{0}^{\theta_0} \int_{0}^{\rho} PL(r,\theta) r dr d\theta +  \int_{\theta_0}^{\pi} \int_{0}^{\frac{l(1-\epsilon^2)}{1-\epsilon  \cos\theta} } \mathrm{PL}(r,\theta) r dr d\theta  \right) \label{power1}
	\end{split}
	\end{equation}
\end{figure*}
where $\mathbb{D}$ denotes the ellipse region and $\mathrm{PL}(r,\theta) \triangleq L(r) L((r^2 + r_0^2 - 2 r r_0 \cos \theta)^{1/2})$ denotes the integrated path loss of the PT-BT-PR path for a BT at $(r,\theta)$. Specifically, (a) is from Campbell's Theorem, (b) is from changing the Cartesian coordinates to polar coordinates, and (c) is from separating the integration in region $\mathbb{R}_1$ and $\mathbb{R}_2$ shown in Fig.~\ref{fig1}. Similarly, we obtain
\begin{equation}
\mathbb{E}[\mathcal{S}_\mathrm{BT}]  =  \frac{\eta P_\mathrm{tx} }{ \mu^2} \int_{0}^{\pi} \int_{0}^{\rho} PL(r,\theta) r dr d\theta. \label{power2} 
\end{equation}
Therefore, (\ref{beta}), (\ref{power1}) and (\ref{power2}) yield
\begin{equation}
\beta \approx  \frac{\int_{0}^{\theta_0} \int_{0}^{\rho} PL(r,\theta) r dr d\theta +  \int_{\theta_0}^{\pi} \int_{0}^{\frac{l(1-\epsilon^2)}{1-\epsilon  \cos\theta} } \mathrm{PL}(r,\theta) r dr d\theta }{ \int_{0}^{\pi} \int_{0}^{\rho} PL(r,\theta) r dr d\theta}. 
\end{equation}

\begin{figure}[!tbp]
	\begin{subfigure}[b]{0.47\linewidth} 
		\centering
		\includegraphics[width = 1\linewidth]{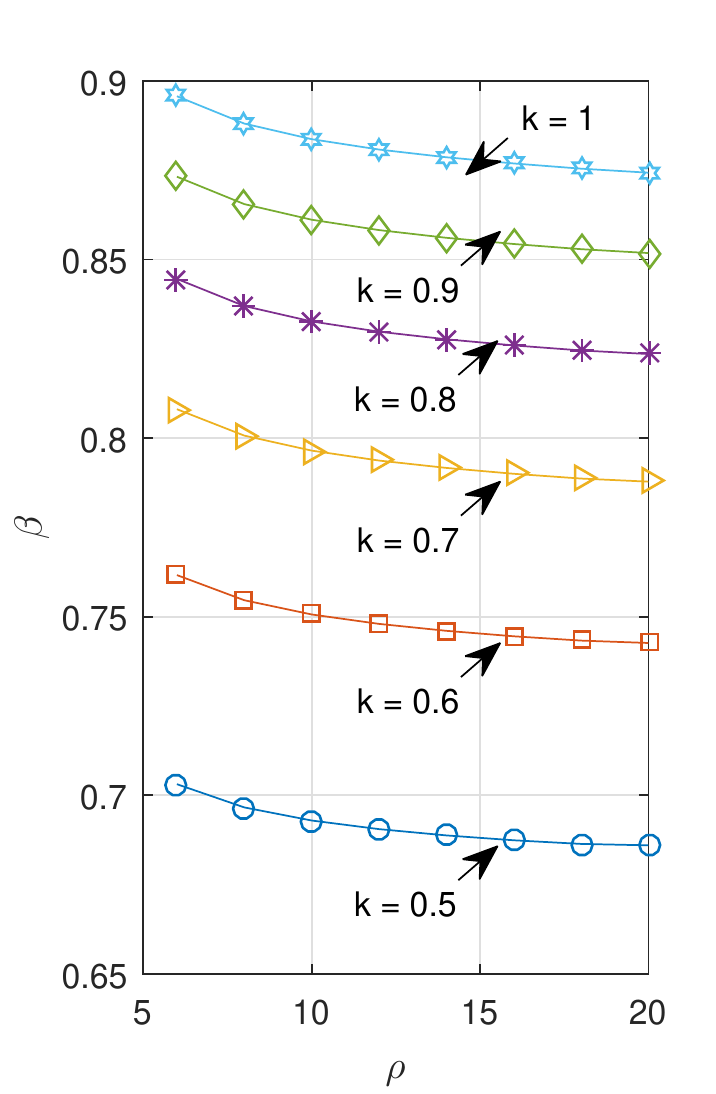} 
		\caption{$\beta$ versus $\rho$ ($r_0 = 25$)}
		\label{beta_rho}
		\hfill
	\end{subfigure}
	\begin{subfigure}[b]{0.47\linewidth} 
		\centering
		\includegraphics[width = 1\linewidth]{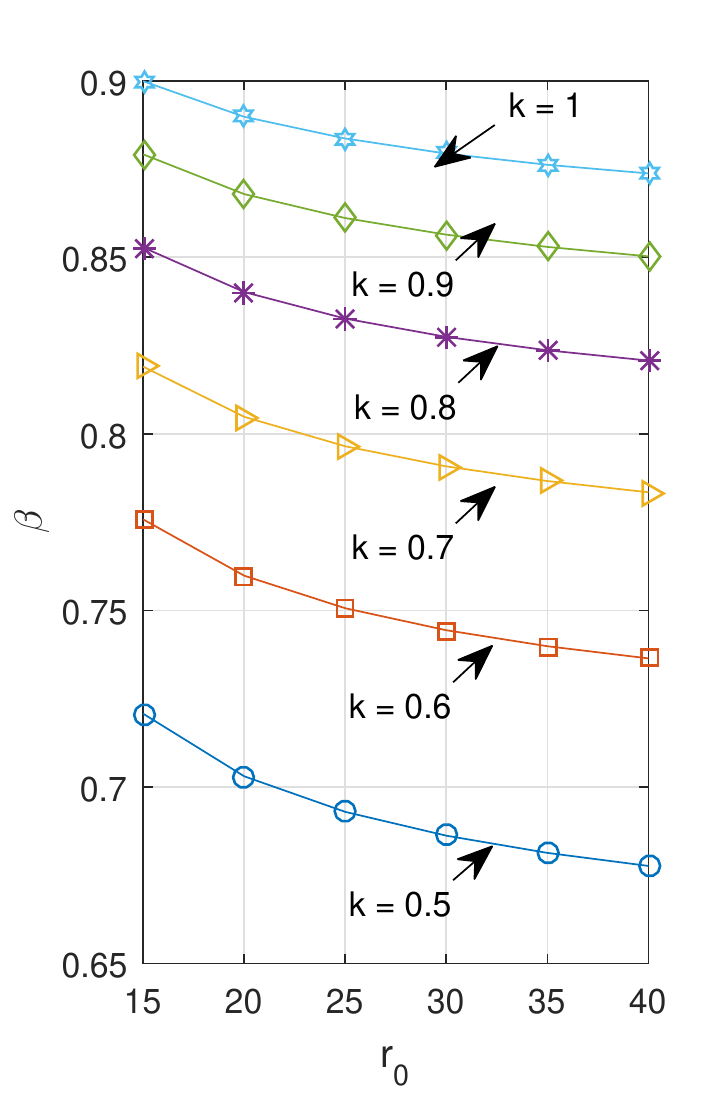} 
		\caption{$\beta$ versus $r_0$ ($\rho = 10$)}
		\label{beta_r0}
		\hfill
	\end{subfigure}	
	\vspace*{-0.3cm}	
	\caption{$\beta$ versus $\rho$ and $r_0$ ($\alpha = 3.5,\ \Delta B = 100$MHz )}	
	\label{fig_beta}
\end{figure}

Numerical results for $\beta$ versus $\rho$ and $r_0$ are shown in Fig.~\ref{beta_rho} and Fig.~\ref{beta_r0} respectively, where the settings are listed in the titles. These two figures show that fraction $\beta$ decreases with the growth of the radius $\rho$ and the typical PT-PR distance $r_0$. The former happens because increasing $\rho$ results in more BTs locating closer to the typical PR that can cause severer interference. When typical PT-PR distance $r_0$ increases, both $\mathcal{S}_\mathrm{BT}^{(1)}$ and $\mathcal{S}_\mathrm{BT}^{(2)}$ are reduced. However,  $\mathcal{S}_\mathrm{BT}^{(2)}$ decreases more than $\mathcal{S}_\mathrm{BT}^{(1)}$ since it is the aggregated power of signals from the offspring BTs that are farther to the typical PR (comparing with the other offspring BTs in the region $\mathbb{R}_i, i = 1,\ldots,4$, which contribute to $\mathcal{S}_\mathrm{BT}^{(1)}$, see Fig.~\ref{fig1}). Thus, a greater $r_0$ corresponds to a smaller $\beta$. Furthermore, Fig.~\ref{fig_beta} shows that $\beta$ increases with the growth of $k$ since a larger $k$ indicates a better delay tolerant capability at the typical PR such that only a smaller amount of BTs that are close enough to the typical PR can cause interference (i.e., the integration area $ \cup_{i=1}^{4} \mathbb{R}_i $ decreases).

\section*{Appendix B. Proof of Lemma 3}

Let $\tilde{g}_i = \omega_i g_i, i = 1,2$. Then, we recognize $\tilde{g}_i$ is exponentially distributed with mean $1/\tilde{\mu}_i = \omega_i / \mu_i$ and have
\begin{equation} 
\begin{split}
F_G(g) & = \mathbb{P}\left( \tilde{g}_1 \le g - \tilde{g}_2 \right) \\
& = \int_{0}^{g} \int_{0}^{g-t_2} \tilde{\mu}_1 e^{-\tilde{\mu}_1 t_1} \tilde{\mu}_2 	 e^{-\tilde{\mu}_2 t_2} d t_1 d t_2 \\
& = 1 - \frac{\tilde{\mu}_1}{\tilde{\mu}_1 - \tilde{\mu}_2} e^{-\tilde{\mu}_2 g} + \frac{\tilde{\mu}_2}{\tilde{\mu}_1 - \tilde{\mu}_2} e^{-\tilde{\mu}_1 g}
\end{split}
\end{equation}	
for $g \ge 0$.

\section*{Appendix C. Proof of Theorem 1}	
In the first scenario, substituting (\ref{I_BTPR})-(\ref{I_PTPR}) in (\ref{covprob_1_SINR}), we obtain equality (a) in (\ref{cov prob 2}) {\color{black}shown on the next page}. Then, (b) results from using Lemma 1 to replace the clustered BTs around atypical PTs with VTs, where $\tilde{g}_\mathrm{Y} \sim \exp(\mu)$ with $Y \in \Phi_P$ is the mean fading power gain of the channel between the VT at $Y$ (co-located with the PT) and the typical PR. (c) is from the complementary cumulative distribution function (CCDF) of exponential random variable $g_\mathrm{Y_0}$. Substituting $a_{X_{Y_0}} = \frac{\mu \eta \left[ \Gamma(1-\beta) - \beta \right]}{2 L(r_0)}  L(r_{X_{Y_0},tx}) L(r_{X_{Y_0},rx})$ to (c), we obtain (d). Next, (d) is written as the product of two expectations in (e) due to the independence between $X_{Y_0}$ and $Y$.
\begin{figure*}
	\begin{equation} 
	\begin{split}
	&\mathbb{P}(\mathrm{SINR} \ge \Gamma) \\
	&\overset{\mathrm{(a)}}{=} \mathbb{P} \Bigg\{  g_\mathrm{Y_0} \ge \frac{1}{L(r_0)} \Big[ \left( \Gamma(1-\beta)-\beta \right) \frac{\eta}{2} \sum_{X_{Y_0} \in \Phi_B(Y_0)} g_\mathrm{X_{Y_0},tx} g_\mathrm{X_{Y_0},rx} L(r_\mathrm{X_{Y_0},tx}) L(r_\mathrm{X_{Y_0},rx}) + \Gamma \sum_{Y \in \Phi_P} g_\mathrm{Y} L(r_\mathrm{Y}) \\ 
	& \quad + \Gamma \frac{\eta}{2} \sum_{Y \in \Phi_P} \sum_{X_Y \in \Phi_B(Y)} g_\mathrm{X_{Y},tx} g_\mathrm{X_{Y},rx} L(r_\mathrm{X_{Y},tx}) L(r_\mathrm{X_{Y},rx}) + \frac{\sigma^2 \Gamma}{P_\mathrm{tx}} \Big] \Bigg\} \\
	& \overset{\mathrm{(b)}}{\approx} \mathbb{P} \Bigg\{  g_\mathrm{Y_0} \ge \frac{1}{L(r_0)} \Big[ \left( \Gamma(1-\beta)-\beta \right) \frac{\eta}{2} \sum_{X_{Y_0} \in \Phi_B(Y_0)} g_\mathrm{X_{Y_0},tx} g_\mathrm{X_{Y_0},rx} L(r_\mathrm{X_{Y_0},tx}) L(r_\mathrm{X_{Y_0},rx}) + \Gamma \sum_{Y \in \Phi_P} ( g_\mathrm{Y} + \gamma \tilde{g}_\mathrm{Y}) L(r_\mathrm{Y}) + \frac{\sigma^2 \Gamma}{P_\mathrm{tx}} \Big] \Bigg\} \\
	& \overset{\mathrm{(c)}}{=} \mathbb{E} \Bigg\{  \exp \Bigg( \frac{-\mu}{L(r_0)} \Big[ \left( \Gamma(1-\beta)-\beta \right) \frac{\eta}{2} \sum_{X_{Y_0} \in \Phi_B(Y_0)} g_\mathrm{X_{Y_0},tx} g_\mathrm{X_{Y_0},rx} L(r_\mathrm{X_{Y_0},tx}) L(r_\mathrm{X_{Y_0},rx}) + \Gamma \sum_{Y \in \Phi_P} ( g_\mathrm{Y} + \gamma \tilde{g}_\mathrm{Y}) L(r_\mathrm{Y}) + \frac{\sigma^2 \Gamma}{P_\mathrm{tx}} \Big] \Bigg)  \Bigg\} \\
	& \overset{\mathrm{(d)}}{=} \mathbb{E} \Bigg\{  \exp \Bigg(  -\sum_{X_{Y_0} \in \Phi_B(Y_0)} a_\mathrm{X_{Y_0}} g_\mathrm{X_{Y_0},tx} g_\mathrm{X_{Y_0},rx}  - \frac{\mu \Gamma}{L(r_0)} \sum_{Y \in \Phi_P} ( g_\mathrm{Y} + \gamma \tilde{g}_\mathrm{Y}) L(r_\mathrm{Y})  \Bigg)  \Bigg\} \exp \left( \frac{-\mu \sigma^2 \Gamma}{L(r_0) P_\mathrm{tx}}\right) \\
	& \overset{\mathrm{(e)}}{=} \mathbb{E} \Bigg\{  \exp \Bigg(  -\sum_{X_{Y_0} \in \Phi_B(Y_0)} a_\mathrm{X_{Y_0}} g_\mathrm{X_{Y_0},tx} g_\mathrm{X_{Y_0},rx} \Bigg) \Bigg\}  \times \mathbb{E} \Bigg\{  \exp \Bigg(- \frac{\mu \Gamma}{L(r_0)} \sum_{Y \in \Phi_P} ( g_\mathrm{Y} + \gamma \tilde{g}_\mathrm{Y}) L(r_\mathrm{Y})  \Bigg)  \Bigg\} \zeta_1 \\
	\end{split}
	\label{cov prob 2}
	\end{equation} 	
\end{figure*}

Furthermore, we derive (\ref{cov prob 2_1}) and (\ref{cov prob 2_2}) {\color{black} on the next page,} where $g_\mathrm{X_{Y_0}} \triangleq g_\mathrm{X_{Y_0},tx} g_\mathrm{X_{Y_0},rx}$, and $\mathbb{D} = B(\boldsymbol{0}, r_0) - B(\boldsymbol{0}, R)$ is the distribution range of $Y$. (a) in (\ref{cov prob 2_1}) and (\ref{cov prob 2_2}) is from the probability generating functional (PGFL) of PPP \cite{haenggi2009interference}. Then, we obtain (b) in (\ref{cov prob 2_1}) by Lemma 2. By changing Cartesian coordinates to polar coordinates and calculating the expectation, (b) and (c) in (\ref{cov prob 2_2}) are obtained in sequence. Thus, we obtain $\mathbb{P}(\mathrm{SINR} \ge \Gamma) \approx \xi_1 \xi_2 \zeta_1$ by substituting (\ref{cov prob 2_1}) and (\ref{cov prob 2_2}) to (\ref{cov prob 2}). Applying similar derivations for the second scenario yields (\ref{cov prob 2s}) {\color{black}shown on the next page}. Setting $\frac{\sigma^2}{P_\mathrm{tx}} = 0$ (i.e., $\zeta_1 = 1$), we can obtain the SIR based coverage probabilities. 
\begin{figure*}
	\begin{equation}	
	\begin{split}
	&\mathbb{E} \Bigg\{  \exp \Bigg(  -\sum_{X_{Y_0} \in \Phi_B(Y_0)} a_\mathrm{X_{Y_0}} g_\mathrm{X_{Y_0},tx} g_\mathrm{X_{Y_0},rx} \Bigg) \Bigg\}  = \underset{ \Phi_B(Y_0)}{\mathbb{E}} \Bigg\{ \prod_{X_{Y_0} \in\Phi_B(Y_0)} \underset{ g_\mathrm{X_{Y_0}}}{\mathbb{E}} \left[ \exp \left( - a_\mathrm{X_{Y_0}} g_\mathrm{X_{Y_0}}  \right) \right] \Bigg\} \\
	&\overset{\mathrm{(a)}}{=} \exp \left\{ -\lambda_B \int_{B(Y_0,\rho)} \left( 1 - \underset{ g_\mathrm{X_{Y_0}} }{\mathbb{E}} \left[ e^{ - a_\mathrm{X_{Y_0}} g_\mathrm{X_{Y_0}} } \right] \right) d X_{Y_0} \right\} \overset{\mathrm{(b)}}{=} \exp \left\{ -\lambda_B \int_{ B(Y_0, \rho)} \left( 1 - \int_{0}^{\infty} \frac{\mu^2 e^{-\mu t}}{a_{X_{Y_0}} t + \mu} dt \right) d X_{Y_0} \right\} \\
	&= \xi_1
	\end{split}
	\label{cov prob 2_1}
	\end{equation}
\end{figure*}

\begin{figure*}
	\begin{equation}  
	\begin{split}
	& \mathbb{E} \Bigg\{  \exp \Bigg(- \frac{\mu \Gamma}{L(r_0)} \sum_{Y \in \Phi_P} ( g_\mathrm{Y} + \gamma \tilde{g}_\mathrm{Y}) L(r_\mathrm{Y})  \Bigg)  \Bigg\} = \mathbb{E} \Bigg\{  \prod_{Y \in \Phi_P} \exp \Bigg(- \frac{\mu \Gamma}{L(r_0)} ( g_\mathrm{Y} + \gamma \tilde{g}_\mathrm{Y}) L(r_\mathrm{Y})  \Bigg)  \Bigg\} \\
	& \overset{\mathrm{(a)}}{=} \exp \left\{ -\lambda_P \int_{\mathbb{D}} \Bigg( 1 - \underset{g_\mathrm{Y}, \tilde{g}_\mathrm{Y}}{\mathbb{E}} \Big[ e^{\frac{-\mu \Gamma}{L(r_0)} (g_\mathrm{Y} + \gamma \tilde{g}_\mathrm{Y}) L(r_\mathrm{Y})}   \Big]  \Bigg) d Y \right\} \overset{\mathrm{(b)}}{=} \exp \left\{ -\lambda_P \int_{0}^{2 \pi} \int_{r_0}^R \Bigg( 1 - \underset{}{\mathbb{E}} \Big[ e^{\frac{-\mu \Gamma L(r)}{L(r_0)} (g_\mathrm{Y} + \gamma \tilde{g}_\mathrm{Y}) }   \Big]  \Bigg) r dr d\theta \right\} \\
	& \overset{\mathrm{(c)}}{=} \exp \left\{ -2 \pi \lambda_P \int_{r_0}^{R} \left( 1 - \frac{1}{1 + \Gamma \frac{r_0^\alpha + 1}{r^\alpha + 1}} \frac{1}{1 + \gamma \Gamma \frac{r_0^\alpha + 1}{r^\alpha + 1}} \right) r dr \right\} = \xi_2
	\end{split}
	\label{cov prob 2_2}
	\end{equation}
\end{figure*}

\begin{figure*}
	\begin{equation} 
	\begin{split}
	&\mathbb{P}(\mathrm{SINR_u} \ge \Gamma) \\
	& = \mathbb{P} \Bigg\{  g_\mathrm{Y_0} \ge \frac{1}{L(r_0)} \Big[ \left( \Gamma(1-\beta)-\beta \right) \frac{\eta}{2} \sum_{X_{Y_0} \in \Phi_B(Y_0)} g_\mathrm{X_{Y_0},tx} g_\mathrm{X_{Y_0},rx} L(r_\mathrm{X_{Y_0},tx}) L(r_\mathrm{X_{Y_0},rx}) + \Gamma \sum_{Y \in \Phi_P} g_\mathrm{Y}  L(r_\mathrm{Y}) + \frac{\sigma^2 \Gamma}{P_\mathrm{tx}} \Big] \Bigg\}  \\
	&  = \mathbb{E} \Bigg\{  \exp \Bigg(  -\sum_{X_{Y_0} \in \Phi_B(Y_0)} a_\mathrm{X_{Y_0}} g_\mathrm{X_{Y_0},tx} g_\mathrm{X_{Y_0},rx} \Bigg) \Bigg\}  \times \mathbb{E} \Bigg\{  \exp \Bigg(- \frac{\mu \Gamma}{L(r_0)} \sum_{Y \in \Phi_P} g_\mathrm{Y}  L(r_\mathrm{Y})  \Bigg)  \Bigg\} \zeta_1 = \xi_1 \xi_\mathrm{2, u} \zeta_1
	\end{split}
	\label{cov prob 2s}
	\end{equation} 	
\end{figure*}

\section*{Appendix D. Proof of Theorem 2}
When $\Gamma(1-\beta)-\beta < 0$, the derivation of coverage probability for the first scenario can be started from the approximation (b) of (\ref{cov prob 2}). As we can notice, the equality (c) in (\ref{cov prob 2}) will not hold if $\Gamma(1-\beta)-\beta < 0$ since the exponential in (c) in (\ref{cov prob 2}) is not guaranteed to be non-positive. In this case, we further approximate the sum power from the clustered BTs around the typical PT as the power from a virtual transmitter located at $Y_0$. Then, from (b) in (\ref{cov prob 2}), the coverage probability is written as
\begin{equation}	
\begin{split}
& \mathbb{P}\left( \mathrm{SINR} \ge \Gamma \right) \\
& \overset{\mathrm{(a)}}{\approx} \mathbb{P} \Bigg\{  g_\mathrm{Y_0} + \tilde{\gamma} \tilde{g}_\mathrm{Y_0} \ge \frac{\Gamma}{L(r_0)} \sum_{Y \in \Phi_P} ( g_\mathrm{Y} + \gamma \tilde{g}_\mathrm{Y}) L(r_\mathrm{Y}) + \frac{\sigma^2 \Gamma}{P_\mathrm{tx}} \Bigg\} \\ 
& \overset{\mathrm{(b)}}{=} \frac{\mu}{\mu-{\mu}/{\tilde{\gamma}}} \mathbb{E} \Big[ \exp \Big( \frac{-\mu \Gamma}{\tilde{\gamma} L(r_\mathrm{Y_0})} \sum_{Y \in \Phi_P} (g_\mathrm{Y} + \gamma \tilde{g}_\mathrm{Y}) L(r_\mathrm{Y}) \Big) \Big] \zeta_2 \\
& \ \ \  - \frac{{\mu}/{\tilde{\gamma}}}{\mu-{\mu}/{\tilde{\gamma}}} \mathbb{E} \Big[ \exp \Big( \frac{-\mu \Gamma}{ L(r_\mathrm{Y_0})} \sum_{Y \in \Phi_P} (g_\mathrm{Y} + \gamma \tilde{g}_\mathrm{Y}) L(r_\mathrm{Y}) \Big) \Big] \zeta_1 \\
& \overset{\mathrm{(c)}}{=} \frac{\tilde{\gamma}}{\tilde{\gamma}-1} \xi_3 \zeta_2 - \frac{1}{\tilde{\gamma}-1} \xi_2 \zeta_1
\end{split}
\end{equation}  
where $\tilde{\gamma} = -\left[ \Gamma(1-\beta)-\beta \right] \gamma > 0$. (a) results from using Lemma 1 to replace the clustered BTs around the typical PT with a VT, where $\tilde{g}_\mathrm{Y_0} \sim \exp(\mu)$ is the mean power gain of the channel between the VT at $Y_0$ and the typical PR. Then, (b) is obtained by calculating the CCDF of $g_\mathrm{Y_0} + \tilde{\gamma} \tilde{g}_\mathrm{Y_0}$ with Lemma 3. Finally, we have (c) using PGFL of PPP, changing coordinates and calculating the expectation (with the same steps as in (\ref{cov prob 2_2})). 

For the second scenario, we derive the coverage probability in the same way and obtain
\begin{equation}	
\begin{split}
& \mathbb{P}\left( \mathrm{SINR_u} \ge \Gamma \right) \\
& {\approx} \mathbb{P} \Bigg\{  g_\mathrm{Y_0} + \tilde{\gamma} \tilde{g}_\mathrm{Y_0} \ge \frac{\Gamma}{L(r_0)} \sum_{Y \in \Phi_P} g_\mathrm{Y}  L(r_\mathrm{Y}) + \frac{\sigma^2 \Gamma}{P_\mathrm{tx}} \Bigg\} \\ 
& {=} \frac{\mu}{\mu-{\mu}/{\tilde{\gamma}}} \mathbb{E} \Big[ \exp \Big( \frac{-\mu \Gamma}{\tilde{\gamma} L(r_\mathrm{Y_0})} \sum_{Y \in \Phi_P} g_\mathrm{Y}  L(r_\mathrm{Y}) \Big) \Big] \zeta_2 \\
& \ \ \  - \frac{{\mu}/{\tilde{\gamma}}}{\mu-{\mu}/{\tilde{\gamma}}} \mathbb{E} \Big[ \exp \Big( \frac{-\mu \Gamma}{ L(r_\mathrm{Y_0})} \sum_{Y \in \Phi_P} g_\mathrm{Y}  L(r_\mathrm{Y}) \Big) \Big] \zeta_1 \\
& {=} \frac{\tilde{\gamma}}{\tilde{\gamma}-1} \xi_\mathrm{3,u} \zeta_2 - \frac{1}{\tilde{\gamma}-1} \xi_\mathrm{2,u} \zeta_1.
\end{split}
\end{equation} 
Setting $\frac{\sigma^2}{P_\mathrm{tx}} = 0$ (i.e., $\zeta_1 = \zeta_2 = 1$), we can obtain the SIR based coverage probabilities.

\bibliographystyle{IEEEtran}
\bibliography{reference_ambc}

\end{document}